\def\section{\@startsection{section}{1}\z@{.9\linespacing\@plus\linespacing}%
  {.7\linespacing} {\fontsize{13}{15}\selectfont\scshape\centering}}
\def\paragraph{\@startsection{paragraph}{4}%
  \z@{0.3em}{-.5em}%
  {$\bullet$ \ \normalfont\itshape}}
\definecolor{gr}{rgb}   {0.,   0.69,   0.23 }
\definecolor{bl}{rgb}   {0.,   0.5,   1. }
\definecolor{mg}{rgb}   {0.85,  0.,    0.85}
\definecolor{or}{rgb}   {0.9,  0.5,   0.}
\definecolor{webred}{rgb}{0.75,0,0}
\definecolor{webgreen}{rgb}{0,0.75,0}
\newtheorem{theorem}{Theorem}[section]
\newtheorem{proposition}[theorem]{Proposition}
\newtheorem{lemma}[theorem]{Lemma}
\theoremstyle{definition}
\theoremstyle{remark}
\newtheorem{remark}[theorem]{Remark}
\newcommand{\N}{\mathbb{N}}
\newcommand{\R}{\mathbb{R}}
\newcommand{\cC}{\mathcal{C}}
\newcommand{\cE}{\mathcal{E}}
\newcommand{\cF}{\mathcal{F}}
\newcommand{\cH}{\mathcal{H}}
\newcommand{\cL}{\mathcal{L}}
\newcommand{\cO}{\mathcal{O}}
\newcommand{\cS}{\mathcal{S}}
\newcommand{\cV}{\mathcal{V}}
\newcommand{\gq}{\mathfrak{q}}
\newcommand{\gh}{\mathfrak{h}}
\newcommand{\ga}{\mathfrak{a}}
\newcommand{\gS}{\mathfrak{S}}
\newcommand{\dP}{\mathbb{P}}
\newcommand{\tr}{\widetilde{r}}
\newcommand{\rd}{\mathrm{d}}
\newcommand\spec{\gS}
\newcommand\sign{\operatorname{sign}}
\newcommand\Ran{\operatorname{Ran}}
\newcommand\dom{\operatorname{Dom}}
\newcommand\supp{\operatorname{supp}}
\newcommand{\beq}{\begin{equation}}
\newcommand{\eeq}{\end{equation}}
\newcommand{\bel}{\begin{equation}\label}
\newcommand{\ee}{\end{equation}}
\newcommand{\bea}{\begin{eqnarray}}
\newcommand{\eea}{\end{eqnarray}}
\newcommand{\beas}{\begin{eqnarray*}}
\newcommand{\eeas}{\end{eqnarray*}}
\numberwithin{equation}{section}
\begin{document}

\title[Characterization of bulk states]{Characterization of bulk states in one-edge quantum Hall systems}

\author[P.\ D.\ Hislop]{Peter D.\ Hislop}
\address{Department of Mathematics,
    University of Kentucky,
    Lexington, Kentucky  40506-0027, USA}
\email{peter.hislop@uky.edu}

\author[N.\ Popoff]{Nicolas Popoff}
\address{Aix Marseille Universit\'e, Universit\'e de Toulon, CNRS, CPT UMR 7332, 13288, Marseille, France}
\email{Nicolas.Popoff@cpt.univ-mrs.fr}

\author[E.\ Soccorsi]{Eric Soccorsi}
\address{Aix Marseille Universit\'e, Universit\'e de Toulon, CNRS, CPT UMR 7332, 13288, Marseille, France}
\email{eric.soccorsi@univ-amu.fr}


\begin{abstract}
We study magnetic quantum Hall systems in a half-plane with Dirichlet boundary conditions along the edge. Much work has been done on the analysis of the currents associated with states whose energy is located between Landau levels. These \emph{edge states} carry a non-zero current that remains well-localized in a neighborhood of the boundary.
In this article, we study the behavior of states with energies close to a Landau level. Such states are referred to as \emph{bulk states} in the physics literature. Since magnetic Schr\"odinger operator is invariant with respect to translations along the edge, it is a direct integral of operators indexed by a real wave number. We analyse these fiber operators and prove new asymptotics on the band functions and their first derivative as the wave number goes to infinity. We apply these results to prove that the current carried by a bulk state is small compared to the current carried by an edge state. We also prove that the bulk states are small near the edge.
\end{abstract}

\maketitle \thispagestyle{empty}

\tableofcontents

\vspace{.2in}

\noindent {\bf  AMS 2000 Mathematics Subject Classification:} 35J10, 81Q10,
35P20.\\
\noindent {\bf  Keywords:}
Two-dimensional Schr\"odinger operators, constant magnetic field. \\


\section{Introduction}

Quantum Hall systems consist of independent electrons constrained to open regions $\Omega$ in the plane $\R^2:=\{ (x,y),\ x, y \in \R \}$ subject to a transverse magnetic field $B(x,y) = (0,0,b(x,y))= \nabla \times a$, and possibly an electric potential $V$. The quantum Hamiltonian is $H(a,V) = (-i \nabla - a)^2 +V$ acting on a dense domain in $L^2 (\Omega)$ with self-adjoint boundary conditions. Several articles describe the physics of such systems when $\Omega$ is bounded. The analysis distinguishes between edge and bulk behavior for the states associated with the Hamiltonian, see for example \cite{Ha82,AkAvNaSe98} and \cite{HorSmi02} for a longer review. This behavior is captured by in two model domains: the plane and a half-plane, 
modeling the interior or the boundary of such a bounded system, respectively.  

In the first case, the plane model is the Landau model $\Omega = \R^2$ with constant magnetic field $b(x,y) = b$. When $V=0$, the classical electron moves in a closed circular orbit of radius the size of $b^{-1/2}$. The spectrum of $H(a,0)$ is pure point with infinitely degenerate eigenvalues at the Landau levels $E_n b$, for $n = 1,2,\ldots$, where $E_n=2n-1$. In the terminology introduced below, all of the states are \emph{bulk states}. 

In the second case of the half-plane, the restriction of the Landau model to the half-plane $x>0$ (with various boundary conditions along $x=0$) has profound consequences for the spectral and transport properties of the system.
From the classical viewpoint, the edge at $x=0$ reflects the classical orbits forming a new current along the edge.
This classical current provides the heuristic insight for quantum edge currents.
\emph{Edge states} for quantum Hall systems restrained to a half-plane $\R^*_+ \times \R
:= \{ (x,y),\ x > 0 \}$ with Dirichlet, or other boundary conditions, at $x=0$ have been analyzed by several authors \cite{DeB-P,FGW,HiSoc1}. These states $\varphi$ are constructed from wave packets with energy concentration between two consecutive Landau levels. The edge current carried by these states is $\mathcal{O}(b^{1/2})$ and it is stable under a class of electric and magnetic perturbations of the Hamiltonian. Furthermore, these states are strongly localized near $x=0$.

In contrast to edge states, \emph{bulk states} are built from wave packets with at least one Landau level in their energy interval (\cite{DeB-P}, \cite[Section 7]{HorSmi02}). This article is devoted to the mathematical study of transport and localization properties of bulk states. More specifically, we prove that one may construct bulk states for which the strength of the current is much smaller than for edge states. In addition, we prove that the bulk states are spatially localized away from the edge. Both of these results are consequences of the fact that a bulk state
has its energy concentrated in the vicinity of a Landau level.
These results are consistent with the classical picture where the orbit of particles localized away from the edge are closed and bounded.

Due to the translational invariance of the system in the $y$ direction, the magnetic Hamiltonian admits a fiber decomposition and $H$ is unitarily equivalent to the multiplication operator by a family of real analytic functions either called \emph{dispersion curves} or \emph{band functions}. The presence of an edge at $x=0$ results in non constant dispersion curves, each of them being a decreasing function in $\R$. Namely, for all $n \geq 1$, the $n$-th band function decreases from infinity to $E_n$, revealing that $E_n$ is a \emph{threshold} in the spectrum of $H$.
Moreover, the transport properties of $H$ are determined from the behavior of the \emph{velocity operator}, defined as the multiplication operator by the family formed by the first derivative of the band functions. 
 It is known that any quantum state with energy concentration between two consecutive Landau levels carries a non trivial current, indeed the velocity operator is lower bounded by some positive constant in the corresponding energy interval $I$.
This condition guarantees in addition the existence of a Mourre inequality for $H$ in $I$ (see \cite{GeNi98}). Such an estimate does not hold anymore if, unlike this, the infimum of the velocity of the band function is zero, a situation occurring when there is at least one Landau level in $\overline{I}$. In this article we study the quantum states localized in such an interval and we provide and accurate upper bound for their current when their energy concentrates near a Landau level.
Moreover, for all $n \geq 1$, the $n$-th band function approaches $E_n$ as the quasi-momentum goes to infinity, but it does not reach its limits. Hence, none of
thresholds of this model is attained, and the set of quasi-momenta associated with energy levels concentrated in the vicinity of any threshold is subsequently unbounded. This has several interesting transport and dynamical consequences, such as the the delocalization of the corresponding quantum states away from the edge $x=0$ following from the phase space analysis carried out in this article.

Notice that the usual methods of harmonic approximation, requiring that thresholds be critical points of the dispersion curves,
do not apply to this peculiar framework. The same is true for several magnetic models examined in \cite{ManPur97, DeB-P, Yaf08, FouHePer11, BruMirRai13} where the band functions tend to finite limits. 
Nevertheless, there is, to our knowledge, only a very small number of articles available in the mathematical literature, studying magnetic quantum Hamiltonians at energy levels in the vicinity of  these non-attained thresholds: we refer to \cite{BruMirRai13} for the same model (with either Dirichlet or Neumann boundary conditions) as the one investigated in the present paper and to \cite{BruPof13} for some $3$-dimensional quantum system with variable magnetic field. In these two articles, the number of eigenvalues induced by some suitable electric perturbation, which accumulate below the first threshold of the system, is estimated. The method we provide in this article to study bulk states can be easily adapted to other magnetic systems where band functions tend to finite limits, as Iwatsuka models (\cite{ManPur97,ExKov00,HiSoc13,DomHisSoc13}) and 3D translationally invariant magnetic field (\cite{Rai08},\cite{Yaf08}).

\subsection{Half-plane quantum Hall Hamiltonian}
\label{SS:HPmodel}

\paragraph{Fiber decomposition and band functions}
Put $\Omega:=\R_+^* \times \R \subset \R^2$ and let the potential $a(x,y):=(0,-bx)$ generate a constant magnetic field with strength $b>0$, orthogonal to $\Omega$.
We consider the quantum Hamiltonian in $\Omega$ with magnetic potential $a$ and Dirichlet boundary conditions at $x=0$, i.e. the self-adjoint operator acting on the dense domain $C_0^\infty(\Omega)$ as $
H(b)  := (-i \nabla - a)^2$,
and then closed in $L^2(\Omega)$. Since $\cV_b H(b) \cV_b^* = b H(1)$, where the transform
\begin{equation}\label{eq:unitary1}
(\cV_b \psi) (x,y) := b^{-1 \slash 4} \psi(\tfrac{x}{b^{1 \slash 2}}, \tfrac{y}{b^{1 \slash 2}}),
\end{equation}
is unitary in $L^2(\Omega)$, we may actually chose $b=1$ without limiting the generality of the foregoing. Thus, writing $H$ instead of $H(1)$ for notational simplicity, we focus our attention on the operator
$$ H:=-\partial_{x}^2+(-i\partial_{y}-x)^2, $$
in the remaining part of this text.

Let $\cF_y$ be the partial Fourier transform with respect to $y$, i.e.
$$
\hat{\varphi}(x,k) = (\cF_y \varphi)(x,k) :=  \frac{1}{\sqrt{2 \pi}} \int_{\R} e^{-ik y} \varphi(x,y) \rd y,\ \varphi \in L^2(\Omega).
$$
Due to the translational invariance of the operator $H$ in the $y$-direction, we have the direct integral decomposition
\bel{fiber-dec}
\cF_y H \cF_y^* = \int_\R^\oplus \gh(k) \rd k,
\ee
where the 1D operator
\bel{D:ghk}
\gh(k):=-\partial_{x}^2+V(x,k),\ V(x,k):=(x-k)^2,\ x>0,\ k \in \R,
\end{equation}
acts in $L^2(\R_{+})$ with Dirichlet boundary conditions at $x=0$. The full definition of the operator $\gh(k)$, $k \in \R$, can be found in Section \ref{sec:asymptotics1}.
For all $k \in \R$ fixed, $V(.,k)$ is unbounded as $x$ goes to infinity, so $\gh(k)$ has a compact resolvent. Let $\{ \lambda_n(k),\ n \in \N^* \}$ denote the eigenvalues, arranged in non-decreasing order, of $\gh(k)$. Since all these eigenvalues $\lambda_n(k)$ with $n \in \N^*$ are simple, then each $k \mapsto \lambda_n(k)$ is a real analytic function in $\R$.
The \emph{dispersion curves} $\lambda_{n}$, $n \in \N^*$, have been extensively studied in several articles (see e.g. \cite{DeB-P}). They are decreasing functions of $k \in \R$, obeying
\bel{E:limband}
\lim_{k\to-\infty}\lambda_{n}(k)=+\infty \ \  \mbox{and}\  \ \lim_{k\to+\infty}\lambda_{n}(k)=E_{n},
\end{equation}
for all $n \in \N^*$, where $E_n:=2n-1$ is the $n$-th Landau level.
\begin{remark}
\label{R:compacite}
For further reference, we notice from \eqref{E:limband} the following useful property:
$$\lim_{n\to +\infty}\inf_{k\in \R} \lambda_{n} (k)=+\infty.$$
\end{remark}
As a consequence the spectrum $\spec(H)$ of $H$ is $\spec(H)=\overline{\cup_{n\geq1}\lambda_{n}(\R)}=[1,+\infty)$.
The Landau levels $E_{n}$, $n \in \N^*$, are thresholds in the spectrum of $H$, and they play a major role in the analysis carried out in the remaining part of this paper.

\paragraph{Fourier decomposition}

For $n \in \N^*$ and $k \in \R$, we consider a normalized eigenfunction $u_{n}(\cdot,k)$ of $\gh(k)$ associated with $\lambda_{n}(k)$. It is well known that $u_{n}(\cdot,k)$ depends analytically on $k$. We define the $n$-th generalized Fourier coefficient of $\varphi \in L^2(\Omega)$ by
\bel{def-phin}
\varphi_{n}(k):=\langle \cF_{y}\varphi(\cdot,k),u_{n}(\cdot,k) \rangle_{L^2(\R_{+})}= \frac{1}{\sqrt{2\pi}}\int_{\R_{+}}\hat{\varphi}(x,k) \overline{u_{n}(x,k)}\rd x,
\ee
and denote by $\pi_{n}$ the orthogonal projection associated with the $n$-th harmonic:
\bel{def-pin}
\pi_{n}(\varphi)(x,y):=\frac{1}{\sqrt{2\pi}} \int_{\R}e^{i y k} \varphi_n(k) u_n(x,k) \rd k,\ (x,y) \in \Omega.
\ee
In light of \eqref{fiber-dec} we have for all $\varphi \in L^2(\Omega)$:
\bel{eq:band-state1}
\varphi  = \sum_{n\geq1}\pi_{n}(\varphi) ,
\ee
and the Parseval theorem yields
\bel{eq:l2-prop1}
\| \varphi \|_{L^2 (\Omega)}^2 = \sum_{n\geq1}  \| \varphi_n \|_{L^2(\R)}^2.
\ee

For any non-empty interval $I \subset \R$, we denote by $\dP_{I}$ the spectral projection of $H$ associated with $I$. We say that the energy of a quantum state $\varphi\in L^2(\Omega)$ is concentrated (or localized) in $I$ if $\dP_{I}\varphi=\varphi$. With reference to \eqref{fiber-dec} and \eqref{def-phin} this condition may be equivalently reformulated as
\bel{E:suppalphan}
\forall  n\in \N^{*}, \supp(\varphi_{n})\subset \lambda_{n}^{-1}(I) \, .
\ee
\subsection{Edge versus bulk}

\paragraph{Current operator and link with the velocity}
Let $y$ denote the multiplier by the coordinate $y$ in $L^2(\Omega)$. The time evolution of $y$ is the Heisenberg variable
$y(t):= e^{-itH} y e^{itH}$, for all $t \in \R$. Its time derivative is the velocity and is given by
$ \frac{\rd y(t)}{\rd t} = -i [ H, y(t)]= -i e^{-i tH} [H, y] e^{itH}$.
We define the current operator as the self adjoint operator
$$ J_{y}:=-i [H,y]=-i\partial_{y}-x, $$
acting on $\dom(H)\cap \dom(y)$. The current carried by a state $\varphi$ is $\langle J_{y} \varphi,\varphi \rangle_{L^2(\Omega)}$.

Well-known computations based on the Feynman-Hellman formula (see also \cite{DeB-P, ManPur97, ExKov00} for similar formulas involving the Iwatsuka models) yield
\bel{E:Jphi}
\forall \varphi\in L^2(\Omega), \quad \langle J_{y}\pi_{n}(\varphi),\pi_{n}(\varphi) \rangle_{L^2(\Omega)}=\int_{\R} \lambda_{n}'(k)|\varphi_{n}(k)|^2 \rd k,
\ee
linking the velocity operator, defined as the multiplication operator in $\bigoplus_{n \in \N^*} L^2(\R)$ by the family of functions $\{ \lambda_{n}',\ n \in \N^* \}$, to the current operator.
\begin{remark}
\label{R:nooverloap}
It is easy to see that \eqref{E:Jphi} extends to any quantum state $\varphi \in L^2(\Omega)$ satisfying the \emph{non-overlapping condition}
$$ \forall m \neq n,\ \supp(\varphi_{m}) \cap \supp(\varphi_{n})=\emptyset, $$
as
$$ \langle J_{y} \varphi,\varphi \rangle_{L^2(\Omega)}=\sum_{n\geq1}\int_{\R} \lambda_{n}'(k) |\varphi_{n}(k)|^2 \rd k. $$
\end{remark}

\paragraph{Edge states and bulk states}

For any bounded subinterval $I\subset \spec(H)$, the spectrum of $H$,
it is physically relevant to estimate the current carried by a state with energy concentration in $I$ and to
describe the support of such a state.

Let $\varphi \in \Ran \dP_I$ be decomposed in accordance with \eqref{def-phin}--\eqref{eq:band-state1}. Since $I$ is bounded by assumption, then the set
$\{n \in \N^*, I \cap \lambda_{n}(\R) \neq \emptyset\}$ is finite by Remark \ref{R:compacite}, so the same is true for $\{ n \in \N^*,\ \pi_n(\varphi) \neq 0 \}$. As a consequence the sum in the r.h.s. of \eqref{eq:band-state1} is finite. Notice that this fact is not a generic property of fiberered magnetic Hamiltonians (see e. g. \cite{Yaf08,BruPof13}).

For all $n \in \N^*$ put
$$ X_{n,I}:=  \Ran \dP_{I}\cap \Ran \pi_{n}, $$
so we have $\Ran \dP_{I}=\bigoplus_{n\geq1}X_{n,I}$. We shall now describe the transport and localization properties of functions in $X_{n,I}$. We shall see that, depending on whether $E_{n}$ lies inside or outside $\overline{I}$, functions in $X_{n,I}$ may exhibit radically different behaviors.

Let us first recall the results of \cite{DeB-P}, corresponding to the case $E_{n} \notin \overline{I}$.
Put $c^{-}(n,I):=\inf_{I} |\lambda'_{n} \circ \lambda_{n}^{-1} |$ and $c^{+}(n,I):=\sup_{I} | \lambda'_{n} \circ\lambda_{n}^{-1} |$. Since $\lambda_{n}$ is a decreasing function, \eqref{E:limband} yields
$$ \forall k \in \lambda_{n}^{-1}(I),\quad 0<c^{-}(n,I) \leq |\lambda_{n}'(k)| \leq c^{+}(n,I)<+\infty. $$  As a consequence, the spectrum of the current operator restricted to $X_{n,I}$ is $[-c^+(n,I),-c^{-}(n,I)]$ by \eqref{E:suppalphan}-\eqref{E:Jphi}. This entails that any state $\psi \in X_{n,I}$ carries a non-trivial current:
\bel{eq:current}
\forall \psi \in X_{n,I}, \quad  |\langle J_{y} \psi , \psi \rangle_{L^2(\Omega)}| \geq c^{-}(n,I) \|\psi \|^2.
\ee
Moreover, all quantum states $\psi \in X_{n,I}$ are mainly supported in a strip $\cS$ of width $O(1)$ along the edge.

Assume that there is no threshold in $\overline{I}$, that is $\{ E_n,\ n \in \N^{*} \} \cap \overline{I}=\emptyset$, and pick $\varphi \in \Ran \dP_I$. Since $\{ n \in \N^*,\ \pi_n(\varphi) \neq 0 \}$ is finite and $\pi_{n}(\varphi) \in X_{n,I}$ is mainly supported in $\cS$ for each $n \in \N^*$, then the same is true for $\varphi=\sum_{n \in \N^*}\pi_{n}(\varphi)$. This explains why it is referred to $\varphi$ as an \emph{edge state}. Moreover, based on Remark \ref{R:nooverloap}, \cite{DeB-P}[Proposition 2.1] entails upon eventually shortening $I$, that such a state $\varphi$ carries a non void edge current:
$$\exists c(I)>0,\ \forall \varphi \in \Ran \dP_{I},\quad |\langle J_{y}\varphi,\varphi \rangle_{L^2(\Omega)}| \geq c(I) \|\varphi \|^2. $$

Let us now examine the case where $E_{n} \in \overline{I}$ for some $n \in \N^*$. For the sake of clarity we assume in addition that there is no other threshold than $E_n$ lying in $\overline{I}$, i.e.
$\{ m \in \N^*,\ E_m \in \overline{I} \} = \{ n \}$. It is apparent that the general case where several thresholds are lying in $\overline{I}$ may be easily deduced from the single threshold situation by superposition principle.

Put $I_{n}^{-}:=(-\infty,E_{n})$ and $I_{n}^{+}=I\cap (E_{n},+\infty)$. Since $\lambda_{n}(\R)\cap I_{n}^{-}=\emptyset$, we have $\pi_{n}\circ \dP_{I_{n}^{-}}=0$, whence $X_{n,I}=X_{n,I_{n}^{+}}$. Thus it suffices to consider an energy interval $I$ of  the form
\bel{D:Idelta}
I_{n}(\delta):=(E_{n},E_{n}+\delta),\ \delta\in (0,2).
\ee
For further reference we define $k_n(\delta)$ as the unique real number satisfying
\bel{D:kdeltan}
\lambda_n(k_n(\delta)) = E_n + \delta.
\ee
Its existence and uniqueness is guaranteed by \eqref{E:limband} and the monotonicity of the continuous function $k \mapsto \lambda_n(k)$.

For all $m \neq n$, it is clear from the above analysis that $X_{m,I_{n}(\delta)}$ is made entirely of edge states.

However, this is not true for $X_{n,I_{n}(\delta)}$.
Indeed, since $\inf_{I_n(\delta)} | \lambda_n' \circ \lambda_n^{-1} |=0$, $c^{-}(n,I_n(\delta))=0$ and the bottom of the spectrum of the current operator restricted to $X_{n,I_n(\delta)}$ is zero, so that \eqref{eq:current} does not hold anymore for all $\psi \in X_{n,I_n(\delta)}$. This indicates the presence of quantum states in $X_{n,I_n(\delta)}$ carrying an arbitrarily small edge current. It turns out that such a state has part of its support localized away from the edge (a fact that will be rigorously established in this paper) and it is called a \emph{bulk state} in physical literature. We thus refer to $X_{n,I_{n}(\delta)}$ as a \emph{bulk space} and subsequently write
$X_{n,\delta}^{\rm b}$ instead of $X_{n,I_n(\delta)}$.

Notice that a definition of bulk states based on another approach is proposed in \cite{DeB-P}: De Bi\`evre and Pul\'e say that a state $\varphi$ is a bulk state associated with the non-rescaled Hamiltonian $H(b)$ when $\pi_{n}(\varphi)=\varphi$ and $\varphi_{n}(k)$ is supported in an interval of the form $(b^{\gamma},+\infty)$ whith $\gamma>1/2$. After stating our results we will come back in Section \ref{SS:avecb} to the non-rescaled Hamiltonian and we will show that our approach is more general than the one of \cite{DeB-P} and cover their results.

\begin{remark}
It is worth to mention that there are actual edge states lying in $X_{n,\delta}^{\rm b}$. With reference to \eqref{def-phin}--\eqref{eq:band-state1}, this can be seen upon noticing that any $\varphi = \pi_n(\varphi) \in X_{n,\delta}^{\rm b}$ such that $\varphi_n$ is compactly supported in $(k_n(\delta),+\infty)$, satisfies an inequality similar to \eqref{eq:current}.
\end{remark}

Let us now stress that any $\varphi \in X^{\rm b}_{n,\delta}$ expressed as
\bel{E:decomposepurebulk}
\varphi(x,y)=\int_{k_{n}(\delta)}^\infty e^{i y k} \varphi_n(k) u_n(x,k) \rd k,
\ee
where $\varphi_{n}\in L^2((k_{n}(\delta),+\infty))$ is defined by \eqref{def-phin} and satisfies
\bel{E:normealphan}
\| \varphi \|_{L^2 (\Omega)}^2 =  \int_{k_{n}(\delta)}^{+\infty} | \varphi_n(k)|^2 \rd k,
\ee
according to \eqref{eq:l2-prop1}.
Further, recalling \eqref{E:Jphi}, the current carried by $\varphi \in X_{n,\delta}$ has the following expression:
\bel{E:currentbulk}
\langle J_{y}\varphi,\varphi \rangle_{L^2(\Omega)}=\int_{k_{n}(\delta)}^{+\infty} \lambda_{n}'(k)|\varphi_{n}(k)|^2 \rd k.
\ee

\paragraph{Main goal}
In view of exhibiting pure bulk behavior, we investigate $X_{n,\delta}$ as $\delta$ goes to 0. We firstly aim to compute a suitable upper bound on the current carried by quantum states lying in $X_{n,\delta}^{\rm }$, as $\delta \downarrow 0$. Secondly we characterize the region in the half-plane where such states  are supported.

Since $k_{n}(\delta)$ tends to $\infty$ as $\delta \downarrow 0$ from \eqref{E:limband}, it is apparent that the analyis of \eqref{E:currentbulk} requires accurate
  asymptotic expansions of $\lambda_{n}(k)$ and $\lambda_n'(k)$ as $k$ goes to infinity. 

Actually, it is well known from \cite{DeB-P} or \cite[Section 2]{FouHePer11} that each $\lambda_{n}$, for $n \in \N^*$, decreases super-exponentially fast to $E_{n}$ as $k$ goes to $+\infty$:
$$ \forall \alpha\in (0,1), \exists (k_{n,\alpha}, C_{n,\alpha}) \in \R_+^* \times \R_+^*,\quad k \geq k_{n,\alpha} \Longrightarrow |\lambda_n(k)-E_{n}| \leq C_{n,\alpha} e^{-\alpha k^2/2}.
$$
 A similar upper bound on $|\lambda_{n}'|$ can be found in \cite{DeB-P}, but it turns out that these estimates are not as sharp as the one required by the analysis developped in this paper. Notice that the asymptotics of $\lambda_{n}(k)$ as $k$ tends to infnity was already investigated in \cite{Popoff}[Chapter 1] and in the unpublished work \cite{Ivrii} (the asympotics of the first derivative is derived as well in the last reference). All the above mentioned results are covered by the one obtained in this article. Notice that the asymptotics of $\lambda_n'(k)$ as $k$ tends to infinity is obtained from those calculated for the eigenfunctions associated with $\lambda_{n}(k)$ as $k \to+\infty$. Moreover these asymptotics on the eigenfunctions are useful when describing the geometrical localization of bulk states when $\delta\downarrow0$.

\subsection{Main results and outline}\label{subsec:main-results1}

Our first result is a precise asymptotics of the band functions and its derivative when $k\to+\infty$:
\begin{theorem}
\label{thm-asymptotics}
For every $n \in \N^*$ there is a constant $\gamma_n>0$ such that the two following estimates
\begin{enumerate}[i)]
\item $\lambda_{n}(k)=E_{n}+2^{2n-1} \gamma_{n}^2k^{2n-1}e^{-k^2}\left(1+\cO(k^{-2})\right)$,
\item $\lambda_{n}'(k)=-2^{2n} \gamma_{n}^2k^{2n}e^{-k^2}\left(1+\cO(k^{-2})\right)$,
\end{enumerate}
hold as $k$ goes to $+\infty$.
\end{theorem}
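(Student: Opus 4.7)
My plan is to exploit the explicit solvability of the 1D equation $-u''+(x-k)^2 u=\lambda u$ in terms of Weber parabolic cylinder functions $D_\nu$, reducing the theorem to the asymptotic inversion of a single transcendental equation and to the evaluation of the boundary value $u_n'(0,k)$. For any $\lambda$ near $E_n=2n-1$, the unique (up to scalar) $L^2(\R_+)$-eigenfunction of $\gh(k)$ at energy $\lambda$ is $D_\nu(\sqrt{2}(x-k))$ with $\nu=(\lambda-1)/2$, since this is the only solution of the ODE that decays at $+\infty$. Setting $\nu_n(k):=(\lambda_n(k)-1)/2$, the Dirichlet condition $u_n(0,k)=0$ is therefore equivalent to the quantisation
\[
D_{\nu_n(k)}(-\sqrt{2}\,k)=0.
\]

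The classical two-term large-argument expansion of $D_\nu(-z)$ for $z\to+\infty$ and $\nu$ close to the non-negative integer $n-1$ has the schematic form
\[
D_\nu(-z)=A(\nu)\,z^\nu e^{-z^2/4}+\frac{\sqrt{2\pi}}{\Gamma(-\nu)}\,z^{-\nu-1}e^{z^2/4}\bigl(1+O(z^{-2})\bigr),
\]
with $A(\nu)$ finite at $\nu=n-1$. The first term is exponentially small while the second is exponentially large but suppressed by the simple zero of $1/\Gamma(-\nu)$ at $\nu=n-1$. Writing $\epsilon:=\lambda_n(k)-E_n$ so that $\nu_n(k)=n-1+\epsilon/2$, Taylor expansion gives $1/\Gamma(-\nu_n(k))=\alpha_n\,\epsilon+O(\epsilon^2)$ with $\alpha_n=(-1)^n(n-1)!/2$. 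Substituting $z=\sqrt{2}\,k$ into the quantisation and balancing the two terms produces the fixed-point relation
\[
\epsilon=C_n\,k^{2n-1}e^{-k^2}\bigl(1+O(k^{-2})+O(\epsilon)\bigr)
\]
for an explicit $C_n$; one iteration yields item (i). Matching with the form required by the theorem identifies $\gamma_n$ with the $L^2$-normalisation constant of the Hermite function $\psi_n$, namely $\gamma_n^2=1/\bigl(\sqrt{\pi}\,2^{n-1}(n-1)!\bigr)$.

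For (ii) I would use the identity
\[
\lambda_n'(k)=-\bigl(u_n'(0,k)\bigr)^2,
\]
obtained by multiplying $\gh(k)u_n=\lambda_n u_n$ by $\partial_x u_n$ and integrating by parts on $\R_+$ using the Dirichlet condition (this is equivalent to Feynman--Hellmann, $\lambda_n'(k)=-2\int_0^\infty (x-k)u_n^2\,dx$, after a further integration by parts). Writing $u_n(x,k)=N_n(k)D_{\nu_n(k)}(\sqrt{2}(x-k))$, the normalisation constant satisfies $N_n(k)^2\to 2^{n-1}\gamma_n^2$ as $k\to+\infty$ (computed from $\int_\R D_{n-1}^2=\sqrt{2\pi}(n-1)!$), and applying the same two-term expansion to $D_\nu'$ in place of $D_\nu$ yields the leading asymptotic of $u_n'(0,k)$; squaring and substituting the value of $\epsilon$ from (i) produces item (ii). As a sanity check, formally differentiating (i) recovers the same prefactor $-2^{2n}\gamma_n^2$.

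The principal technical obstacle is the careful bookkeeping of signs, branches and subleading terms in the two-term expansions of $D_\nu(-z)$ and $D_\nu'(-z)$ near $\nu=n-1$: one must verify that the Taylor remainder in $1/\Gamma(-\nu_n(k))$, the factor $z^{\nu_n(k)-(n-1)}=1+O(\epsilon\log k)$ and the higher-order terms of the expansion all produce only an $O(k^{-2})$ relative error. Thanks to the super-exponential smallness of $\epsilon$ already obtained at leading order, this becomes a routine self-consistent iteration and closes the argument.
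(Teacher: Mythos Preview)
Your approach is correct and genuinely different from the paper's. The paper never invokes parabolic cylinder functions or any exact special-function representation. Instead, it builds a quasi-mode
\[
f_n(x,k)=\alpha(k)\Psi_n(x-k)+\beta(k)\chi(x,k)\Phi_n(x-k),
\]
where $\Psi_n$ is the Hermite function, $\Phi_n$ the second (exponentially growing) solution of $(h-E_n)\Phi=0$, and $\chi$ a cutoff supported in $[0,3k/4]$. Imposing $f_n(0,k)=0$ fixes $\beta(k)$, and integrating by parts gives $\eta_n(k)-E_n$ as a boundary term $-\beta(k)\Phi_n(-k)f_n'(0,k)$ plus a remainder localised in the cutoff region, of size $O(k^{2n-1}e^{-7k^2/4})$. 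Kato--Temple then converts this into $|\lambda_n(k)-\eta_n(k)|\le 2\epsilon_n(k)^2$, yielding (i). For (ii) the paper also uses the Hadamard identity $\lambda_n'(k)=-u_n'(0,k)^2$, but to evaluate $u_n'(0,k)$ it needs a pointwise bound on $u_n'-f_n'$; this forces an $H^2$-estimate of $u_n-f_n$, obtained through commutator computations with the weight $(x-k)$.

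Your route trades the quasi-mode machinery for the classical two-term asymptotics of $D_\nu(-z)$ and the simple zero of $1/\Gamma(-\nu)$ at $\nu=n-1$. This is shorter and computes the constant directly, but it is specific to the exactly solvable potential $(x-k)^2$; the paper's method, by contrast, is robust and transplants to related models (Neumann boundary, Iwatsuka fields) where no closed-form eigenfunction is available, which is why they chose it. One point to tighten in your write-up: the ``schematic'' expansion of $D_\nu(-z)$ needs an explicit reference with the correct real form and sign of the subdominant coefficient $A(\nu)$ (your balance requires $A(n-1)=(-1)^{n-1}$, which is indeed what $D_{n-1}(-z)=(-1)^{n-1}D_{n-1}(z)$ gives), and you should make precise that the asymptotic series for $D_\nu(-z)$ is uniform for $\nu$ in a neighbourhood of $n-1$, since you are letting $\nu=\nu_n(k)$ vary with $z=\sqrt2\,k$.
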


\begin{remark}
Notice that the second part of Theorem \ref{thm-asymptotics} may actually be recovered upon formally differentiating the first part with respect to $k$.
\end{remark}

The method used in the derivation of Theorem \ref{thm-asymptotics} is inspired by the method of quasi-modes used in \cite[Section 5]{BolHe93}. Moreover, as detailed in Subsection \ref{subsec:iwatsuka1}, the computation of the asymptotics of  $\lambda_{n}(k)$
is closely related to the rather tricky problem of understanding the eigenvalues of the Schr\"odinger operator with double wells $-h^2 \partial_t^2 + (|t|-1)^2$ in the semi-classical limit $h \downarrow 0$.

Let us now characterize bulk states with energy concentration near the Landau level $E_n$, in terms of the distance $\delta>0$ of their energy to $E_n$. We recall that $X_{n,\delta}^{\rm b}$ denotes the linear space of quantum states with energy in the interval $(E_{n},E_{n}+\delta)$ and all Fourier coefficients uniformly zero, except for the $n$-th one.

Firstly we give the smallness of the current carried by bulk states when $\delta$ goes to 0:

%

\begin{theorem}
\label{thm-bulk}
For every $n \in \N^*$ we may find two constants $\mu_n>0$ and $\delta_{n}>0$, both of them depending only on $n$, such that for each $\delta \in (0,\delta_n)$ and all $\varphi \in X^{\rm b}_{n,\delta}$, we have
\bel{i1}
\left| \langle J_{y}\varphi,\varphi\rangle \right| \leq \left( 2\delta \sqrt{|\log \delta|} + \mu_{n} \frac{\delta \log|\log\delta|}{\sqrt{|\log\delta|}} \right) \| \varphi \|_{L^2(\Omega)}^2.
\ee

\end{theorem}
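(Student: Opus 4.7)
The starting point is the representation \eqref{E:currentbulk} combined with the normalization \eqref{E:normealphan}: for every $\varphi \in X_{n,\delta}^{\rm b}$ we have
\begin{equation*}
|\langle J_y \varphi, \varphi\rangle_{L^2(\Omega)}| \leq \Bigl(\sup_{k \geq k_n(\delta)} |\lambda_n'(k)|\Bigr) \|\varphi\|_{L^2(\Omega)}^2,
\end{equation*}
so the proof reduces to estimating this supremum as $\delta \downarrow 0$. The plan is to show that, for $\delta$ sufficiently small, the supremum is attained at the endpoint $k=k_n(\delta)$, and then to evaluate that boundary value using Theorem \ref{thm-asymptotics} together with the defining relation \eqref{D:kdeltan}.

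The monotonicity step is immediate from Theorem \ref{thm-asymptotics}(ii): writing $|\lambda_n'(k)| = 2^{2n}\gamma_n^2 k^{2n} e^{-k^2}(1+\cO(k^{-2}))$ and using that $k \mapsto k^{2n}e^{-k^2}$ is strictly decreasing for $k>\sqrt{n}$, there exists $K_n>0$ such that $|\lambda_n'|$ is decreasing on $[K_n,+\infty)$. Since $k_n(\delta) \to +\infty$ as $\delta \downarrow 0$ by \eqref{E:limband}, one may pick $\delta_n>0$ so small that $k_n(\delta) \geq K_n$ for every $\delta \in (0,\delta_n)$. Then dividing (ii) by (i) in Theorem \ref{thm-asymptotics} and evaluating at $k_n(\delta)$ gives
\begin{equation*}
|\lambda_n'(k_n(\delta))| = 2 k_n(\delta)\bigl(\lambda_n(k_n(\delta))-E_n\bigr)\bigl(1+\cO(k_n(\delta)^{-2})\bigr) = 2\delta\, k_n(\delta)\bigl(1+\cO(k_n(\delta)^{-2})\bigr),
\end{equation*}
thanks to \eqref{D:kdeltan}.

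The main technical step is then a two-term inversion of the transcendental asymptotics (i) to compute $k_n(\delta)$ precisely. Taking logarithms in (i) at $k=k_n(\delta)$, and setting $L:=|\log\delta|$, yields
\begin{equation*}
k_n(\delta)^2 = L + (2n-1)\log k_n(\delta) + \cO(1).
\end{equation*}
A first bootstrap gives $k_n(\delta)^2 = L(1+o(1))$, hence $\log k_n(\delta) = \tfrac{1}{2}\log L+o(1)$; reinjecting produces
\begin{equation*}
k_n(\delta)^2 = L + \tfrac{2n-1}{2}\log L + \cO(1), \qquad k_n(\delta) = \sqrt{L} + \tfrac{2n-1}{4}\frac{\log L}{\sqrt{L}} + \cO(L^{-1/2}).
\end{equation*}
Substituting into the formula for $|\lambda_n'(k_n(\delta))|$ above and absorbing the factor $1+\cO(L^{-1})$ into the error, one obtains
\begin{equation*}
|\lambda_n'(k_n(\delta))| \leq 2\delta\sqrt{|\log\delta|} + \mu_n \frac{\delta\log|\log\delta|}{\sqrt{|\log\delta|}},
\end{equation*}
for some $\mu_n>0$ and all $\delta \in (0,\delta_n)$ (possibly after shrinking $\delta_n$), which combined with the initial supremum bound yields \eqref{i1}.

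The main obstacle is the bookkeeping in the two-term inversion: both the Gaussian factor $e^{-k^2}$ and the polynomial prefactor $k^{2n-1}$ in Theorem \ref{thm-asymptotics}(i) have to be tracked to second order in order to capture the $\delta\log|\log\delta|/\sqrt{|\log\delta|}$ correction. Once $k_n(\delta)$ is pinned down with this precision, the remainder of the argument is a direct consequence of the eventual monotonicity of $|\lambda_n'|$ and of the identity $|\lambda_n'(k)|/(\lambda_n(k)-E_n) = 2k(1+\cO(k^{-2}))$ read off from Theorem \ref{thm-asymptotics}.
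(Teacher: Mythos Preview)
Your approach is essentially the same as the paper's: reduce to bounding $\sup_{k\geq k_n(\delta)}|\lambda_n'(k)|$ via \eqref{E:currentbulk}--\eqref{E:normealphan}, push the supremum to the endpoint $k=k_n(\delta)$ by a monotonicity argument, and then combine the ratio $|\lambda_n'(k)|/(\lambda_n(k)-E_n)=2k(1+\cO(k^{-2}))$ with a two-term inversion of Theorem~\ref{thm-asymptotics}(i) to obtain the asymptotics of $k_n(\delta)$. Your inversion is exactly the content of the paper's Lemma yielding \eqref{bs1}, and the endpoint evaluation is the content of Lemma~\ref{lm-derivative}.

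There is, however, one step that is not justified as written. You assert that from $|\lambda_n'(k)|=2^{2n}\gamma_n^2 k^{2n}e^{-k^2}\bigl(1+\cO(k^{-2})\bigr)$ together with the monotonicity of $k\mapsto k^{2n}e^{-k^2}$ one can conclude that $|\lambda_n'|$ itself is decreasing on some $[K_n,\infty)$. This inference is invalid: an asymptotic expansion with an unspecified $\cO(k^{-2})$ remainder says nothing about monotonicity of the full function, since the remainder could oscillate. The paper sidesteps this by never claiming that $|\lambda_n'|$ is eventually monotone. Instead it fixes a constant $c_n>0$ with
\[
|\lambda_n'(k)|\ \leq\ 2^{2n}\gamma_n^2\, k^{2n}e^{-k^2}\Bigl(1+\frac{c_n}{k^{2}}\Bigr),\qquad k\geq \tilde{k}_n,
\]
and observes that this \emph{explicit} majorant is decreasing for large $k$; hence the supremum over $k\geq k_n(\delta)$ is bounded by the value of the majorant at $k_n(\delta)$. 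Converting that value back through Theorem~\ref{thm-asymptotics}(i) still produces $2\delta\,k_n(\delta)\bigl(1+\cO(k_n(\delta)^{-2})\bigr)$, so the sharp leading constant $2$ in \eqref{i1} is preserved and the remainder of your argument goes through unchanged. With this small correction your proof is complete and matches the paper's.
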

\begin{remark}
\label{R:encadrement}
Estimate \eqref{i1} is accurate in the sense that for all $0 < \delta_1 < \delta_2<2$ and any interval $I_{n}:=(E_{n}+\delta_{1},E_{n}+\delta_{2})$ avoiding $E_{n}$, we find
by arguing in the exact same way as in the derivation of Theorem \ref{thm-bulk} that
$$\forall \varphi\in X_{n,I_{n}}, \quad c_{n}(\delta_{1})\| \varphi \|_{L^2(\Omega)}^2 \leq \left| \langle J_{y}\varphi,\varphi\rangle \right| \leq c_{n}(\delta_{2})\| \varphi \|_{L^2(\Omega)}^2, $$
provided $\delta_{1}$ and $\delta_{2}$ are sufficiently small. Here $c_n(\delta_j)$, $j=1,2$, stands for the constant obtained by substituting $\delta_j$ for $\delta$ in the prefactor of
$\| \varphi \|_{L^2(\Omega)}^2$ in the r.h.s. of \eqref{i1}.
\end{remark}


Finally, we discuss the localization of the bulk states in $X^{\rm b}_{n,\delta}$. We prove that when $\delta$ goes to 0, there are small in a strip of width $\sqrt{|\log \delta|}$, showing that thay are not localized near the boundary.

\begin{theorem}
\label{thm-bulk-loc1}
Fix $n \in \N^*$. Then for any $\epsilon\in (0,1)$ there exists $\delta_{n}(\epsilon)>0$, such that for all $\delta \in (0,\delta_{n}(\epsilon))$, the estimate
\bel{eq:bulk-local1}
\int_0^{(1-\epsilon)\sqrt{|\log\delta|}} \| \varphi (x,\cdot) \|_{L^2(\R)}^2 \rd x \leq C_n \epsilon^{2n-1} \delta^{\epsilon^2} |\log\delta|^{\frac{2n-1}{2}(1-\epsilon^2)} \|\varphi\|_{L^2(\Omega)}^2
\ee
holds for every $\varphi\in X_{n,\delta}^{\rm b}$ and some positive constant $C_n$ depending only on $n$.
\end{theorem}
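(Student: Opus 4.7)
The plan is to combine the Plancherel identity in the $y$ variable with a sharp pointwise bound on the fiber eigenfunctions $u_n(x,k)$ for large $k$. For $\varphi\in X_{n,\delta}^{\rm b}$ written as in \eqref{E:decomposepurebulk}, applying Plancherel in $y$ followed by Fubini yields
$$
\int_0^{(1-\epsilon)\sqrt{|\log\delta|}} \| \varphi(x,\cdot) \|_{L^2(\R)}^2\,\rd x = \int_{k_n(\delta)}^{+\infty} |\varphi_n(k)|^2\, J_n(k,\delta)\,\rd k,
$$
where $J_n(k,\delta):=\int_0^{(1-\epsilon)\sqrt{|\log\delta|}} |u_n(x,k)|^2\,\rd x$. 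In view of \eqref{E:normealphan}, it suffices to bound $J_n(k,\delta)$ uniformly in $k \geq k_n(\delta)$ by the prefactor on the right-hand side of \eqref{eq:bulk-local1}.

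The first ingredient I would establish is the geometric separation between the strip $[0,(1-\epsilon)\sqrt{|\log\delta|}]$ and the ``classical center'' $k$ of $u_n(\cdot,k)$. Inverting the asymptotics in Theorem~\ref{thm-asymptotics}(i) at $\lambda_n(k_n(\delta))=E_n+\delta$ gives $k_n(\delta)=\sqrt{|\log\delta|}\,(1+o(1))$ as $\delta\downarrow 0$; consequently, for $\delta$ small enough depending on $\epsilon$, any $k\geq k_n(\delta)$ and any $x\in[0,(1-\epsilon)\sqrt{|\log\delta|}]$ satisfy $k-x\geq\epsilon\sqrt{|\log\delta|}$. The second ingredient is a pointwise eigenfunction estimate. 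The quasi-mode construction underlying Theorem~\ref{thm-asymptotics} shows that, for $k$ large, $u_n(\cdot,k)$ on $\R_+$ is close to the shifted Hermite function $h_{n-1}(\cdot-k)$, with the reflected-wave correction needed to meet the Dirichlet condition at $x=0$ being exponentially small in $k^2$. From this, I would derive a bound of the form
$$
|u_n(x,k)|^2 \leq C_n (1+|k-x|)^{2(n-1)}\, e^{-(k-x)^2}, \quad x\geq 0,\ k \geq k_0(n).
$$

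Given this pointwise control, the substitution $s=k-x$ reduces the estimation of $J_n(k,\delta)$ to a Gaussian--polynomial tail integral on $[\epsilon\sqrt{|\log\delta|},+\infty)$. A careful estimate of this tail---for instance by splitting $e^{-s^2}=e^{-\epsilon^2 s^2}e^{-(1-\epsilon^2) s^2}$ and maximizing $s^{2(n-1)}e^{-\epsilon^2 s^2}$ over $s\geq\sqrt{(n-1)/\epsilon^2}$, or by invoking the asymptotic $\Gamma(\nu,R)\sim R^{\nu-1}e^{-R}$ of the upper incomplete Gamma function---produces a factor $\delta^{\epsilon^2}$ from the exponential decay and combines with the polynomial prefactors in $\epsilon$ and $\sqrt{|\log\delta|}$ to give $\epsilon^{2n-1}|\log\delta|^{(2n-1)(1-\epsilon^2)/2}$. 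Substituting back and using \eqref{E:normealphan} finishes the proof.

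The main obstacle is the pointwise eigenfunction estimate with the correct polynomial prefactor and the uniform control of the Dirichlet reflection contribution for $k\geq k_n(\delta)$. I expect this to rest on the full quasi-mode construction underlying Theorem~\ref{thm-asymptotics} together with an $L^2$-based Agmon-type estimate with a quadratic weight proportional to $(k-x)^2$ to close off the remainder term. Once this pointwise bound is in hand, the Gaussian-tail computation and the final integration against $|\varphi_n(k)|^2$ are routine.
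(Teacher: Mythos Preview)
Your plan is correct and parallels the paper's proof closely: both reduce via Plancherel to bounding $\int_0^{a}\|u_n(\cdot,k)\|^2$ for $k\geq k_n(\delta)$, use $k_n(\delta)\sim\sqrt{|\log\delta|}$ to ensure $k-x\gtrsim\epsilon\sqrt{|\log\delta|}$ on the strip, and finish with a Gaussian--polynomial tail estimate.

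The only real difference is in how the eigenfunction is controlled on the strip. You aim for a uniform \emph{pointwise} bound $|u_n(x,k)|^2\leq C_n(1+|k-x|)^{2(n-1)}e^{-(k-x)^2}$, expecting to extract it from the quasi-mode construction plus perhaps an Agmon argument. The paper avoids pointwise control entirely: it works in $L^2$, writes $\|u_n\|_{L^2(0,a_n(\delta))}\leq\|u_n-f_n\|_{L^2(\R_+^*)}+\|f_n\|_{L^2(0,a_n(\delta))}$, bounds the first term by the $H^1$ quasi-mode estimate of Proposition~\ref{prop:H1-est1}, and computes the second explicitly by splitting $f_n=\alpha\Psi_n(\cdot-k)+\beta\chi\Phi_n(\cdot-k)$ into its two pieces; after a change of variable and Fubini in $(k,\tilde{x})$, each piece is a Gaussian tail and the $\Psi_n$-piece dominates. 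So no Agmon estimate and no $L^\infty$ control are needed --- the $H^1$ closeness already suffices. Your pointwise route is also valid, since the paper's $H^2$ estimate (Proposition~\ref{prop:H2est1} and \eqref{eq:Linf-fnc-est1}) does give $\|u_n-f_n\|_{L^\infty}=\cO(k^n e^{-7k^2/8})$, which is small enough to be absorbed into your bound on the strip; it just draws on a slightly stronger input than the paper actually uses here.

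One bookkeeping remark: the paper takes the strip width $a_n(\delta)=(1-\epsilon)k_n(\delta)$ rather than $(1-\epsilon)\sqrt{|\log\delta|}$, so the Gaussian tail threshold is $\epsilon k_n(\delta)$. The second-order term in $k_n(\delta)^2=|\log\delta|+\tfrac{2n-1}{2}\log|\log\delta|+\cdots$ then produces the exponent $\tfrac{2n-1}{2}(1-\epsilon^2)$ on $|\log\delta|$. A direct tail computation with threshold $\epsilon\sqrt{|\log\delta|}$ gives a slightly different polynomial prefactor (of order $\epsilon^{2n-3}|\log\delta|^{(2n-3)/2}$), so your last sentence overstates the matching of constants; the bound still implies the stated one after possibly shrinking $\delta_n(\epsilon)$, but the precise exponents come from using $k_n(\delta)$ rather than $\sqrt{|\log\delta|}$.
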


Let $\varphi(t):=e^{-i t H \varphi}$, for $t \in \R$, be the time evolution of $\varphi \in X_{n,\delta}^{\rm b}$. Since $\varphi(t) \in X_{n,\delta}^{\rm b}$ for all $t \in \R$
it is apparent that Theorems \ref{thm-bulk} and \ref{thm-bulk-loc1} remain valid upon substituting $\varphi(t)$ for $\varphi$ in \eqref{i1}-\eqref{eq:bulk-local1}. As a consequence, the localization property and the upper bound on the current carried by a state lying in $X_{n,\delta}^{\rm b}$ survive for all times.

\subsection{Influence of the magnetic field strength}
\label{SS:avecb}
All our results are stated for the rescaled Hamiltonian $H$ with unit magnetic field strength $b=1$. We discuss the corresponding results
    for the magnetic Hamiltonian $H(b)$ associated with a constant magnetic field of strength $b>0$. Recall that $\cV_{b}$ is the unitary transformation defined in \eqref{eq:unitary1} implementing the $b^{1/2}$-scaling that allows us to normalized $H(b)$. 
Let $J_{y}(b):=-i\partial_{y}-bx$ be the non-rescaled current operator, then we have
$\cV_{b}J_{y}(b)\cV^{*}_{b}=b^{1 \slash 2} J_{y} $ and
$$\forall \varphi\in \dom(J_{y}(b)), \quad \langle J_{y}(b)\varphi,\varphi \rangle = b^{1/2}\, \langle J_{y}\cV_{b}\varphi,\cV_{b}\varphi \rangle  \, . $$
Moreover $\varphi\in X_{n,bI_{n}(\delta)}$ if and only if $\cV_{b}\varphi\in X_{n,\delta}$, therefore applying Theorem \ref{thm-bulk} we get
$$\forall \varphi\in X_{n,bI_{n}(\delta)}, \quad \left| \langle J_{y}(b)\varphi,\varphi \rangle \right| \leq b^{1 \slash 2}\,  \left( 2\delta \sqrt{|\log \delta|} + \mu_{n} \frac{\delta \log|\log\delta|}{\sqrt{|\log\delta|}} \right) \| \varphi \|_{L^2(\Omega)}^2,
 $$
as soon as $\delta$ is small enough. Looking at quantum Hall systems with strong magnetic field, it is then natural to consider $b$ large and $\delta=\delta(b)$ going to 0 as $b$ goes to $+\infty$ and one may use our result to describe the states of such a system. It is now natural to provide another possible definition for edge and bulk states associated with the non-rescaled Hamiltonian $H(b)$, in relation with the magnetic field strength: given an interval $I\subset \R$, a state $\varphi\in X_{n,I}$ is an edge state if its current is of size $b^{1 \slash 2}$, and it is a bulk state if its current is $o(b^{1 \slash 2})$ as $b$ gets large.

 We now compare our results with the analysis of De Bi\`evre and Pul\'e (\cite{DeB-P}). They define a bulk state $\varphi$ associated with the magnetic Hamiltonian $H(b)$ as a state satisfying $\pi_{n}(\varphi)=\varphi$ and such that $\varphi_{n}(k)$ is supported in an interval of the form $(b^{\gamma},+\infty)$ with $\gamma>1/2$. Put $\gamma=1/2+\epsilon$, such a state is localized in energy in the interval $(bE_{n},b\lambda_{n}(b^{\epsilon}))$. Using Theorem \ref{thm-asymptotics}, we get when $b$ gets large $\lambda_{n}(b^{\epsilon})-E_{n}\sim e^{-2b^{2\epsilon}}$. Therefore their approach is covered by the one presented in this article by setting $\delta(b):=e^{-2b^{2\epsilon}}$ and letting $b$ going to $+\infty$. For this particular choice of $\delta(b)$, Theorem \ref{thm-bulk} provides a better estimate than in \cite{DeB-P}[Corollary 2.1]. Moreover our approach is more general, in the sense that we do not restrict our analysis by a particular choice of $\delta$.

Similarly, Theorem \ref{thm-bulk-loc1} implies that any states localized in energy in the interval $(bE_{n},bE_{n}+\delta b)$ is small in a strip of width $b^{-1/2}$ when $\delta$ becomes small, with a control given by the r.h.s. of \eqref{eq:bulk-local1}.

\section{Asymptotics of the band functions}\label{sec:asymptotics1}

In this section, we prove the first part of Theorem \ref{thm-asymptotics} on the asymptotic expansion of the band functions $\lambda_n(k)$ as $k \rightarrow \infty$. The proof consists of 4 steps.
We first recall results on the harmonic oscillator and its eigenfunctions. We next construct approximate eigenfunctions
$f_n(x,k)$ of $\gh(k)$ so that $\gh(k) f_n(x,k) = E_n f_n(x,k) + R_n(x,k)$ and estimate the norm $\| R_n( \cdot,k)\|_{L^2(\R^*_+)}$.
We prove that the energy $\eta_n(k) := \langle  \gh(k) f_n ( \cdot,k), f_n(\cdot,k) \rangle$
of the approximate eigenfunction $f_n$ is a good approximation to the Landau level $E_n$. Finally, we use the Kato-Temple inequality ot obtain the result.

Here are some notations and definitions. Let us define the quadratic form
\begin{equation}
\label{D:fqgh}
\gq_{k}[u]:=\int_{\R_{+}}(|u'(x)|^2+(x-k)^2|u(x)|^2) \rd x,\ \dom(\gq_k):= \{u\in H_{0}^{1}(\R_{+}^{*}), xu \in L^2(\R_{+}^{*}) \}.
\end{equation}
Here $H_0^1(\R_+^*)$ is as usual the closure of $C_0^\infty(\R_+^*)$ in the topology of the first order Sobolev space $H^1(\R_+^*)$. The operator $\gh(k)$ (expressed in \eqref{D:ghk}) is the Friedrichs extansion of the above quadratic form and it its domain is
\begin{equation}
\label{D:domhk}
\dom(\gh(k)):= \{u \in H_0^1(\R_+^*) \cap  H^2(\R_{+}^{*}), x^2u\in L^2(\R_{+}^{*}) \} \, .
\end{equation}


\subsection{Getting started: recalling the harmonic oscillator}

The harmonic oscillator
$$
h:=-\partial_{x}^2+x^2,\ x\in \R,
$$
has a pure point spectrum made of simple eigenvalues $\{ E_n := 2n-1,\ n \in \N^* \}$, the Landau levels. The associated $L^2(\R)$-normalized eigenfunctions are the Hermite functions
\begin{equation}
\label{D:Psi}
\Psi_{n}(x):=P_{n}(x)e^{-x^2/2},\ x \in \R,\ n \in \N^*
\end{equation}
where $P_{n}$ stands for the $n$-th Hermite polynomial obeying $\deg(P_{n})=n-1$. These functions satisfy $\Psi_n (-x) = (-1)^{n-1} \Psi_n (x)$. The explicit expression \eqref{D:Psi} results in the two following asymptotic formulae (see \cite{AbSt64} or \cite{Si75})
\begin{equation}
\label{A:Psi} \Psi_{n}(x)\underset{x\to -\infty}{=}\gamma_{n}2^{n-1}x^{n-1}e^{-x^2/2}\left(1+\cO(x^{-2})\right)
\end{equation}
and
\begin{equation}
\label{A:Psi'}
\Psi_{n}'(x)\underset{x\to -\infty}{=}\gamma_{n}2^{n-1}x^{n}e^{-x^2/2}\left(-1+\cO(x^{-2})\right),
\end{equation}
where $\gamma_n:=(2^{n-1}(n-1)!\sqrt{\pi})^{-1/2}$ is a normalization constant.
Next, put
\begin{equation}
\label{D:Phi}
\Phi_{n}(x):=\Psi_{n}(x)\int_{0}^{x}|\Psi_{n}(t)|^{-2}\rd t,\ x \in \R,\ n \in \N^*,
\end{equation}
so $\{ \Psi_{n},\Phi_{n} \}$ forms a basis for the space of solutions to the ODE $hf=E_{n}f$.
Then we get
\begin{equation}
\label{A:Phi}
\Phi_{n}(x)\underset{x\to -\infty}{=}(\gamma_{n}2^{n})^{-1}\frac{e^{x^2/2}}{x^{n}}\left(1+\cO(x^{-2})\right)
\end{equation}
and
\begin{equation}
\label{A:Phi'}
 \Phi_{n}'(x)\underset{x\to -\infty}{=}(\gamma_{n}2^{n})^{-1}\frac{e^{x^2/2}}{x^{n-1}}\left(1+\cO(x^{-2})\right),
 \end{equation}
through elementary computations based on \eqref{A:Psi}--\eqref{D:Phi}.


\subsection{Building quasi-modes for $\gh(k)$ in the large $k$ regime}\label{SS:quasimode}

Following the idea of \cite{BolHe93} and \cite{Bol92} we now build quasi-modes for the operator $\gh(k)$ when the parameter $k$ is taken sufficiently large. We look at vectors of the form
\bel{qm1}
f_{n}(x,k)=\alpha(k)\Psi_{n}(x-k)+\beta(k)\chi(x,k)\Phi_{n}(x-k),\ x >0,\ k \in \R,
\ee
where $\Psi_{n}$ and $\Phi_{n}$ are respectively defined by \eqref{D:Psi} and \eqref{D:Phi}, and $\alpha$, $\beta$ are two functions of $k$ we shall make precise below. Bearing in mind
that $\Phi_{n}(\cdot,k)$ is unbounded on $\R_+^*$, the cut-off function $\chi$ is chosen in such a way that $f(\cdot,k) \in L^2(\R_{+}^*)$. Namely, we pick a non-increasing function $\chi_{0}\in \cC^{\infty}(\R_{+},[0,1])$ such that $\chi_{0}(x)=1$ for $x \in [0,\frac{1}{2}]$ and $\chi(x)=0$ for $x\in [\frac{3}{4},+\infty)$, and put
$$\chi(x,k):=\chi_{0}\left(\frac{x}{k}\right),\ x>0,\ k \in \R.$$
We impose Dirichlet boundary condition at $x=0$ on $f_{n}(\cdot,k)$, getting
$$ \beta(k)=-\alpha(k) \frac{\Psi_{n}(-k)}{\Phi_{n}(-k)}, $$
since $\Phi_{n}(-k)$ is non-zero for $k$ sufficiently large, by \eqref{A:Phi}. From this, \eqref{A:Psi} and \eqref{A:Phi}, it then follows that
\begin{equation}
\label{E:beta0}
\beta(k)= 2^{2n-1}\gamma_{n}^2 \alpha(k) k^{2n-1}e^{-k^2}\left(1+\cO(k^{-2})\right),
\end{equation}
which entails
$\|f_{n}(\cdot,k)\|_{L^2(\R_+^*)}^2=\alpha(k)^2\left(1+\cO(k^{2n-1}e^{-k^2})\right)$,
through direct computation. As a consequence we have
\bel{A:alphak}
\alpha(k)=1+\cO(k^{2n-1}e^{-k^2})
\ee
by compliance with the normalization condition $\|f_{n}(\cdot,k)\|_{L^2(\R_{+}^*)}=1$, hence
\begin{equation}
\label{E:beta1}
\beta(k)=2^{2n-1} \gamma_{n}^2 k^{2n-1} e^{-k^2}\left(1+\cO(k^{-2})\right),
\end{equation}
according to \eqref{E:beta0}.


\subsection{Energy estimation}\label{subsec:energy-est1}

Bearing in mind that $f_{n}(0,k)=0$ and $f_{n}(x,k)=\alpha(k)\Psi_{n}(x-k)$ for $x \geq 3k \slash 4$, it is clear from \eqref{D:domhk} that $f_{n}(\cdot,k) \in \dom(\gh(k))$, so
the energy carried by the state $f_n(\cdot,k)$ is well defined by
\bel{qm1b}
\eta_{n}(k):=\langle \gh(k)f_{n}(\cdot,k),f_{n}(\cdot,k) \rangle_{L^2(\R_{+}^*)}.
\ee
To estimate the error of approximation of $E_n$ by $\eta_{n}(k)$, we introduce
\bel{qm2}
r_{n}(x,k):=(\gh(k)-E_n) f_{n}(x,k),\ x>0,
\ee
in such a way that $\eta_{n}(k)-E_{n}=\langle r_{n}(\cdot,k),f_{n}(\cdot,k)\rangle_{L^2(\R_{+}^*)}$. Integrating by parts twice successively in this integral and
remembering \eqref{qm1}, we find out that
\bea
\eta_{n}(k)-E_{n} &=& \beta(k)\big\langle (\gh(k)-E_{n})(\chi(\cdot,k)\Phi_{n}(\cdot-k)),f_{n}(\cdot,k)\big\rangle_{L^2(\R_{+}^*)}
\label{E:interaction+integral} \\
& = & -\beta(k)\Phi_{n}(-k)f_{n}'(0,k)+\beta(k)\big\langle \chi(\cdot,k)\Phi_{n}(\cdot-k),r_{n}(\cdot,k)\big\rangle_{L^2(\R_{+}^*)}.
\nonumber
\eea
Further, upon combining \eqref{qm1} and \eqref{qm2} with the commutator formula $[\gh(k),\chi]=-\chi'' - 2 \chi' \partial_x$, we get that
\begin{equation}
\label{E:exprr}
r_{n}(x,k)=-\beta(k)\chi''(x,k)\Phi_{n}(x-k)-2\beta(k) \chi'(x,k)\Phi_{n}'(x-k),\ x>0,
\end{equation}
showing that $r_{n}(\cdot,k)$ is supported in $\supp (\chi'(\cdot,k))$, ie
\begin{equation}
\label{E:suppr}
\supp(r_{n}(\cdot,k))\subset [\tfrac{k}{2},\tfrac{3k}{4}].
\end{equation}
Putting \eqref{A:Phi}, \eqref{A:Phi'}, \eqref{E:beta1} and \eqref{E:exprr} together, and taking into account that
\begin{equation}
\label{E:estimchiderive}
\|\chi'(\cdot,k)\|_{L^{\infty}(\R)}=\cO(1/k) \ \  \mbox{and} \ \ \|\chi''(\cdot,k)\|_{L^{\infty}(\R)}=\cO(1/k^2),
\end{equation}
we obtain for further reference that
\begin{equation}
\label{E:estimater}
\| r_{n}(\cdot,k)\|^2_{L^2(\R_{+}^*)}=\cO(k^{2n-1}e^{-\frac{7k^2}{4}}).
\end{equation}
Let us now prove that the interaction term $-\beta(k)\Phi_{n}(0,k)f_{n}'(0,k)$ is the main contribution to the r.h.s. of  \eqref{E:interaction+integral}. Applying \eqref{A:Phi} and \eqref{A:Phi'}, we get
$$\|\Phi_{n}(\cdot-k)\|_{L^{\infty}(\frac{k}{2},\frac{3k}{4})}=\cO(k^{-n}e^{k^2/8})\ \ \mbox{and}\ \ \|\Phi_{n}'(\cdot-k)\|_{L^{\infty}(\frac{k}{2},\frac{3k}{4})}=\cO(k^{-n+1}e^{k^2/8}),$$
which, together with \eqref{E:beta0}, \eqref{E:exprr} and \eqref{E:estimchiderive}, yields
$\|r_{n}(\cdot,k) \|_{L^{\infty}(\R_+^*)}=\cO(k^{n-1}e^{-\frac{7k^2}{8}})$.
From this, \eqref{E:beta1} and the estimate
$$\left|\langle \chi(\cdot,k) \phi_{n}(\cdot-k), r_{n}(\cdot,k)\rangle_{L^2(\R_+^*)} \right|\leq \frac{k}{4}\|r_{n}(\cdot,k)\|_{L^{\infty}(\R_+^*)}\|\Phi_{n}(\cdot-k)\|_{L^{\infty}(\frac{k}{2},\frac{3k}{4})}, $$
then it follows that
$\beta(k)\langle \chi(\cdot,k) \phi_{n}(\cdot-k), r_{n}(\cdot,k)\rangle_{L^2(\R_+^*)} = \cO(k^{2n-1}e^{-\frac{7k^2}{4}})$.
Hence we have
\bel{qm3}
\eta_{n}(k)-E_{n} = -\beta(k)f_{n}'(0,k)\Phi_{n}(-k)+\cO(k^{2n-1}e^{-\frac{7k^2}{4}}),
\ee
by \eqref{E:interaction+integral}. In order to evaluate the remaining term $-\beta(k)f_{n}'(0,k)\Phi_{n}(-k)$, we take advantage of the fact that $\chi_{k}(0)=1$ and $\chi_{k}'(0)=0$, and derive from \eqref{A:Psi'} and  \eqref{A:Phi'}-\eqref{qm1} that
\bel{E:evalf'0}
f_{n}'(0,k)= \alpha(k)\Psi_{n}'(-k)+\beta(k)\Phi_{n}'(-k) = (-1)^{n-1} 2^{n} \gamma_{n} k^{n}e^{-k^2/2} \left(1+\cO(k^{-2}) \right).
\ee
Therefore we have $-\beta(k) f_{n}'(0,k)\Phi_{n}(-k)=2^{2n-1} \gamma_{n}^2 k^{2n-1} e^{-k^2} \left(1+\cO(k^{-2}) \right)$ by \eqref{A:Phi} and \eqref{E:beta1}, so we end up getting
\begin{equation}
\label{E:etaestim}
\eta_{n}(k)-E_{n} =2^{2n-1} \gamma_{n}^2k^{2n-1}e^{-k^2}\left(1+\cO(k^{-2}) \right),
\end{equation}
with the aid of \eqref{qm3}.

\subsection{Asymptotic expansion of $\lambda_n(k)$}
Let us first introduce the error term
$$\epsilon_{n}(k):=\|(\gh(k)-\eta_{n}(k))f_{n}(\cdot,k)\|_{L^2(\R_{+}^*)},$$
and combine the estimate $\epsilon_{n}(k) \leq \|r_{n}(\cdot,k)\|_{L^2(\R_{+})}+|\eta_{n}(k)-E_{n}|$ arising from \eqref{qm2}, with \eqref{E:estimater} and \eqref{E:etaestim}. We obtain that
\begin{equation}
\label{E:epsilonestim}
\epsilon_{n}(k)^2=\cO(k^{2n-1}e^{-\frac{7k^2}{4}}).
\end{equation}
We are now in position to apply Kato-Temple's inequality (see \cite[Theorem 2]{Har78}), which can be stated as follows.
\begin{lemma}
\label{L:temple}
Let $A$ be a self-adjoint operator acting on a Hilbert space $\cH$. We note $\ga$ the quadratic form associated with $A$. Let $\psi \in \dom(A)$ be $\cH$-normalized and put $\eta=\ga[\psi]$ and $\epsilon=\| (A-\eta)\psi \|_\cH$. Let $\alpha<\beta$ and $\lambda \in \R$ be such that $\spec(A)\cap(\alpha,\beta)=\{\lambda\}$. Assume that $\epsilon^2 <(\beta-\eta)(\eta-\alpha)$. Then we have
$$
\eta-\frac{\epsilon^2}{\beta-\eta}<\lambda<\eta+\frac{\epsilon^2}{\eta-\alpha}.
$$
\end{lemma}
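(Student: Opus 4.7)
The plan is to reduce the operator-theoretic statement to a moment estimate for the scalar spectral measure $\rd\mu(t) := \rd\|E_t \psi\|_{\cH}^2$ associated with $A$ and the normalized vector $\psi$, where $(E_t)_{t \in \R}$ denotes the resolution of identity of $A$. By the spectral theorem, $\mu$ is then a probability measure supported on $\spec(A) \subset (-\infty,\alpha] \cup \{\lambda\} \cup [\beta,+\infty)$, with $\int t \, \rd\mu(t) = \eta$ and $\int (t-\eta)^2 \, \rd\mu(t) = \epsilon^2$. The hypothesis $\epsilon^2 < (\beta-\eta)(\eta-\alpha)$ in particular forces $\eta \in (\alpha,\beta)$, so both denominators appearing in the conclusion are positive.

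To establish the upper bound $\lambda \leq \eta + \epsilon^2/(\eta-\alpha)$, I would test $\mu$ against the auxiliary polynomial $p(t) := (t-\alpha)(t-\lambda)$. On $\spec(A)$ the integrand $p$ is everywhere nonnegative: it vanishes at $t = \lambda$, is the product of two nonpositive factors on $(-\infty,\alpha]$, and the product of two positive factors on $[\beta,+\infty)$. Hence $\int p \, \rd\mu \geq 0$. On the other hand, expanding $p$ in powers of $(t-\eta)$, namely $p(t) = (t-\eta)^2 + (2\eta - \alpha - \lambda)(t-\eta) + (\eta-\alpha)(\eta-\lambda)$, and integrating termwise with the two moment identities above, the same integral rewrites as $\epsilon^2 + (\eta-\alpha)(\eta-\lambda)$. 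Comparing the two expressions yields $(\lambda-\eta)(\eta-\alpha) \leq \epsilon^2$, and dividing by $\eta-\alpha > 0$ gives the upper bound. The lower bound $\lambda \geq \eta - \epsilon^2/(\beta-\eta)$ follows from the symmetric argument using $p(t) := (t-\beta)(t-\lambda)$, whose integrand is again nonnegative on $\spec(A)$.

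The only delicate step is upgrading the two non-strict inequalities to the strict ones stated in the lemma. The hard part will be ruling out the extremal situation in which the nonnegative integrand vanishes $\mu$-almost everywhere; this would concentrate $\mu$ on the at-most-two-point set $\{\lambda\} \cup (\{\alpha\} \cap \spec(A))$ in the first case, or on $\{\lambda\} \cup (\{\beta\} \cap \spec(A))$ in the second. A short case analysis computing $\epsilon^2$ explicitly on such a two-atom measure and comparing with the strict hypothesis $\epsilon^2 < (\beta-\eta)(\eta-\alpha)$ produces the required contradiction and closes the proof.
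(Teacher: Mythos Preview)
The paper does not actually prove this lemma; it merely cites Harrell \cite[Theorem 2]{Har78}. Your spectral-measure argument with the test polynomials $p(t)=(t-\alpha)(t-\lambda)$ and $p(t)=(t-\beta)(t-\lambda)$ is the standard proof and correctly yields the non-strict inequalities
\[
\eta-\frac{\epsilon^2}{\beta-\eta}\;\le\;\lambda\;\le\;\eta+\frac{\epsilon^2}{\eta-\alpha}.
\]

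Your attempt to upgrade these to strict inequalities, however, does not go through, and in fact the strict inequalities as stated in the lemma are \emph{false} in general. Take the two-atom measure $\mu=p\,\delta_\alpha+(1-p)\,\delta_\lambda$ with $p\in(0,1)$ and $\alpha\in\spec(A)$. Then $\eta=p\alpha+(1-p)\lambda\in(\alpha,\lambda)$, $\epsilon^2=p(1-p)(\lambda-\alpha)^2$, and a direct computation gives $\eta+\epsilon^2/(\eta-\alpha)=\lambda$ exactly, while the hypothesis $\epsilon^2<(\beta-\eta)(\eta-\alpha)$ is satisfied because $\lambda<\beta$. Even more simply, if $\psi$ is an eigenvector of $A$ with eigenvalue $\lambda$, then $\eta=\lambda$, $\epsilon=0$, the hypothesis $0<(\beta-\lambda)(\lambda-\alpha)$ holds, yet the conclusion would read $\lambda<\lambda<\lambda$. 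So your ``short case analysis'' cannot produce a contradiction, because none exists: in the equality case one has $\epsilon^2=(\lambda-\eta)(\eta-\alpha)<(\beta-\eta)(\eta-\alpha)$, which is perfectly compatible with the hypothesis.

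This is a defect of the lemma's formulation rather than of your argument. For the paper's only application (the estimate $|\lambda_n(k)-\eta_n(k)|\le 2\epsilon_n(k)^2$ in \eqref{E:avecepsilon}), the non-strict version you have proved is entirely sufficient.
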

Fix $N \in \N^*$. Since $\lim_{k \rightarrow +\infty} \lambda_{n}(k)=E_n$ for all $n \in \N^*$, we may choose $k_N>0$ so large that $\lambda_{n}(k) \in (E_{n},E_{n}+1]$ for all $k \geq k_N$ and $n\in [|1,N+1|]$. This entails
\begin{equation}
\label{E:spectralgap}
|\lambda_{n}(k)-\lambda_{p}(k)| \geq 1,\ k \geq k_{N},\ p \neq n,\ n \in  [|1,N|].
\end{equation}
Moreover, upon eventually enlarging $k_N$, we have
\bel{qm4}
|\eta_{n}(k)-(E_{n}\pm1)|\geq \tfrac{1}{2},\ k \geq k_N,\ n\in [|1,N|],
\ee
in virtue of \eqref{E:etaestim}. Thus, applying Lemma \ref{L:temple} with $\eta=\eta_{n}(k)$, $\alpha=E_{n}-1$, $\beta=E_{n}+1$ and $\epsilon=\epsilon_{n}(k)$ for each
$n \in [|1,N|]$ and $k \geq k_N$, there is necessarily one eigenvalue of $\gh(k)$ belonging to the interval $(\eta_{n}(k)-2\epsilon_{n}^2(k),\eta_{n}(k)+2\epsilon_{n}^2(k))$, according to \eqref{qm4}. Since the only eigenvalue of $\gh(k)$, $k \geq k_N$, lying in $(E_{n},E_{n}+1]$ is $\lambda_{n}(k)$, we obtain that
\begin{equation}
\label{E:avecepsilon}
 |\lambda_{n}(k)-\eta_{n}(k)| \leq 2\epsilon_{n}^2(k),\ k\geq k_{N}, n\in [|1,N|].
\end{equation}
Putting this together with \eqref{E:etaestim} and \eqref{E:epsilonestim} we end up getting the first part of Theorem \ref{thm-asymptotics}.

\subsection{Relation to a semiclassical Schr\"odinger operator and to the Iwatsuka model}\label{subsec:iwatsuka1}

In this section we exhibit the link between the asymptotics of the eigenpairs of $\gh(k)$ for large $k$ and the semi-classical limit of a Schr\"odinger operator on $\R$ with a symmetric double-wells potential.

Let us introduce the operator $H(k):=-\partial_{x}^2+(|x|-k)^2$ acting on $L^2(\R)$ and denote by $\mu_{n}(k)$ its $n$-th eigenvalue. The operator $H(k)$ is the fiber of the magnetic Laplacian associated with the Iwatsuka magnetic field $B(x,y)=\sign(x)$ defined on $\R^2$. This Hamiltonian has been studied in \cite{PeetRej, DomHisSoc13}.  The eigenfunction associated to $\mu_{n}(k)$ are even when $n$ is odd and odd when $n$ is even, therefore the restriction to $\R_{+}$ of any eigenfunction associated with $\mu_{2n}(k)$ is an eigenfunction for the operator $\gh(k)$ associated with $\lambda_{n}(k)$ and we have $\mu_{2n}(k)=\lambda_{n}(k)$. In the same way we prove that $\mu_{2n-1}(k)$ is the $n$-th eigenvalue of the operator $\gh^{\rm N}(k):=-\partial_{x}^2+(x-k)^2$ acting on $L^2(\R_{+})$ with a Neumann boundary condition.

We refer to \cite[Proposition 1.1]{Popoff} or \cite{DomHisSoc13} for more details on the link between $H(k)$, $\gh(k)$ and the operator $\gh^{\rm N}(h)$.

Using the scaling $t=kx$ we get that $H(k)$ is unitary equivalent to the operator
$$h^{-1}\left(-h^2\partial_{t}^2+(|t|-1)^2 \right), \quad t\in \R$$
where we have set $h=k^{-2}$. Therefore when $k$ gets large we reduce the problem to the understanding of the eigenvalues of the Schr\"odinger operator $-h^2\partial_{t}^2+(|t|-1)^2 $ in the semi-classical limit $h\downarrow 0$. The asymptotic expansion of the eigenvalues of Schr\"odinger operators is well-known when the potential has a unique non-degenerated minimum and uses the ``harmonic approximation", see for example \cite{Si82}. However in our case the potential $(|t|-1)^2$ is even and have a double-wells and one may expect tunneling effect between the two wells $t=1$ and $t=-1$. More precisely the eigenvalues clusters into pairs exponentially close to the eigenvalue associated to the one-wells problem that are the Landau levels (see \cite{Har80}, \cite{CoDuSe83} or \cite{HeSj84}).

The asymptotic behavior of the gap between eigenvalues in such a problem is given in \cite{HeSj84} under the hypothesis that the potential is $\cC^{\infty}(\R)$. Helffer and Sj\"ostrand use a BKW expansion of the eigenfunctions far from the wells. The key point is a pointwise estimate of an interaction term involving among others the high order derivatives of the potential at 0. Here it is not possible to use their result since the potential $(|t|-1)^2$ is not $\cC^1$ at 0. Our proof uses the fact that the potential is piecewise analytic and the knowledge of the solutions of the ODE associated to the eigenvalue problem.

 Note that mimicking the above proof it is possible to get the asymptotic expansion of the eigenvalues of the operator $\gh^{\rm N}(k)$ for large $k$ as in \cite[Section 1.4]{Popoff}.


\section{Asymptotics of the derivative of the band functions}\label{sec:asymptotics-deriv1}

In this section, we prove the asymptotic expansion of $\lambda_n^\prime (k)$, the second part of Theorem \ref{thm-asymptotics}.

\subsection{Hadamard formula}
We turn now to establishing Part ii) of Theorem \ref{thm-asymptotics}. To this purpose we introduce a sequence $\{ u_{n}(\cdot,k),\ n \in \N^* \}$ of $L^2(\R_+^*)$-normalized eigenfunctions of $\gh(k)$, verifying
$$
\left\{ \begin{array}{rcl}
-u_{n}''(x,k)+(x-k)^2u_{n}(x,k)&=& \lambda_{n}(k)u_{n}(x,k),\ x>0
\\
u_{n}(0,k)&=& 0. \end{array}
\right.
$$
Since the operator $\gh(k)$ is self-adjoint with real coefficients we choose all the $u_n(\cdot,k)$ to be real. 
Due to the simplicity of $\lambda_n(k)$, each $u_n(.,k)$ is thus uniquely defined, up to the multiplicative constant $\pm 1$.
We note $\Pi_{n}(k): \varphi\mapsto \langle \varphi(\cdot), u_n(\cdot,k) \rangle_{L^2(\R_+^*)}u_{n}(\cdot,k)$ the spectral projection of $\gh(k)$ associated with $\lambda_{n}(k)$
and call $F_n(k)$ the eigenspace spanned by $u_n(\cdot,k)$.

The proof of the asymptotic expansion of $\lambda_n'$ stated in Theorem \ref{thm-asymptotics} relies on the Hadamard formula (see \cite[Section VI]{CouHi} or \cite{Jo67}):
\begin{equation}
\label{E:formuleHadamard}
\lambda_{n}'(k)=-u_{n}'(0,k)^2,\ k \in \R,
\end{equation}
and thus requires that $u_n'(\cdot,k)$ be appropriately estimated at $x=0$. We proceed as in the derivation of \cite[Proposition 2.5]{HeSj84}.


\subsection{$H^1$-estimate of the eigenfunctions}\label{subsec:H1-est1}

The method boils down to the fact
that the operator $\gh(k)-\lambda_{n}(k)$ is a boundedly invertible on $F_{n}(k)^{\perp}$. Hence $(\gh(k)-\lambda_{n}(k))^{-1}$ is a bounded isomorphism from $F_{n}(k)^{\perp}$ onto
$\dom(\gh(k)) \cap F_n(k)^\perp$ and there exists $k_n>0$ such that we have
$$ \|(\gh(k)-\lambda_{n}(k))^{-1}\|_{\cL(F_{n}(k)^{\perp})} \leq 1,\ k \geq k_n, $$
in virtue of \eqref{E:spectralgap}. From this and the identity
$$
(\gh(k)-\lambda_{n}(k))\left(f_{n}(\cdot,k)-\Pi_{n}(k)f_{n}(\cdot,k)\right)=r_{n}(\cdot,k)+(E_{n}-\lambda_{n}(k))f_{n}(\cdot,k),
$$
arising from \eqref{qm2}, it then follows that
\begin{equation}
\label{E:estimnormef-pi}
\|f_{n}(\cdot,k)-\Pi_{n}(k)f_{n}(\cdot,k)\|_{L^2(\R_+^*)} \leq \|r_{n}(\cdot,k)\|_{L^2(\R_+^*)}+|E_{n}-\lambda_{n}(k)|,\ k \geq k_n.
\end{equation}
Moreover we have
\beas
& & \gq_{k}[f_{n}(\cdot,k)-\Pi_{n}(k) f_{n}(\cdot,k)] \\
&=& (\gq_{k}-\lambda_{n}(k))[f_{n}(\cdot,k)-\Pi_{n}(k) f_{n}(\cdot,k)]+\lambda_{n}(k)\|f_{n}(\cdot,k)-\Pi_{n}(k)f_{n}(\cdot,k)\|_{L^2(\R_+^*)}^2
\\
&=& (\gq_{k}-\lambda_{n}(k))[f_{n}(\cdot,k)]+\lambda_{n}(k) \|f_{n}(\cdot,k)-\Pi_{n}(k)f_{n}(\cdot,k)\|_{L^2(\R_+^*)}^2
\\
&=& \eta_{n}(k)-\lambda_{n}(k)+\lambda_{n}(k)\|f_{n}(\cdot,k)-\Pi_{n}(k)f_{n}(\cdot,k)\|_{L^2(\R_+^*)}^2,
\eeas
from \eqref{qm1b}, hence
\begin{equation}
\label{E:estimgq}
\gq_{k}[f_{n}(\cdot,k)-\Pi_{n}f_{n}(\cdot,k)]^{1/2} =\cO\big(\epsilon_{n}(k)+\|r_{n}(\cdot,k)\|_{L^2(\R_+^*)}+|E_{n}-\lambda_{n}(k)|\big),
\end{equation}
according to \eqref{E:avecepsilon} and \eqref{E:estimnormef-pi}. Since $\dom(\gh(k))$ (endowed with the natural norm $\gq_k[\cdot]^{1 \slash 2}$) is continuously embedded in $H^1(\R_+^*)$, we may substitute $\| f_{n}(\cdot,k)-\Pi_{n}(k)f_{n}(\cdot,k)\|_{H^1(\R_{+}^*)}$ for $\gq_{k}[f_{n}(\cdot,k)-\Pi_{n}f_{n}(\cdot,k)]^{1/2}$ in the lhs of \eqref{E:estimgq}. Thus we obtain
\bel{E:estimH1}
\| f_{n}(\cdot,k)-\Pi_{n}(k)f_{n}(\cdot,k)\|_{H^1(\R_{+}^*)}= \cO(k^{n-\frac{1}{2}}e^{-\frac{7k^2}{8}}),
\ee
with the help of \eqref{E:estimater} and Part i) in Theorem \ref{thm-asymptotics}.
As a consequence we have
\bel{qm4b}
|1-\|\Pi_{n}(k)f_{n}(\cdot,k)\|_{L^2(\R_+^*)}| \leq \|f_{n}(\cdot,k)-\Pi_{n}(k)f_{n}(\cdot,k)\|_{L^2(\R_+^*)}=\cO(k^{n-\frac{1}{2}}e^{-\frac{7k^2}{8}}),
\ee
whence
\bel{qm5}
\left\| f_{n}(\cdot,k)-\frac{\Pi_{n}(k)f_{n}(\cdot,k)}{\|\Pi_{n}(k)f_{n}(\cdot,k)\|_{L^2(\R_+^*)}}\right\|_{H^1(\R_{+}^*)}=\cO(k^{n-\frac{1}{2}}e^{-\frac{7k^2}{8}}).
\ee
Bearing in mind that
\begin{equation}
\label{E:approxu-f}
u_{n}(\cdot,k)= \frac{\Pi_{n}(k)f_{n}(\cdot,k)}{\|\Pi_{n}(k)f_{n}(\cdot,k)\|_{L^2(\R_+^*)}},
\end{equation}
upon eventually substituting $(-u_n(\cdot,k))$ for $u_n(\cdot,k)$, it follows from \eqref{qm5} that the quasi-mode $f_n(\cdot,k)$ is close to the eigenfunction $u_{n}(\cdot,k)$ in the $H^1$-norm sense, provided $k$ is large enough.
We summarize these results in the following propostion.

\begin{proposition}\label{prop:H1-est1}
For large $k$, the eigenfunction $u_n(x,k)$ is well approximated by the quasi-mode $f_n(x,k)$ in the sense that.
$$
\|f_{n}(\cdot,k)- u_n(\cdot, k) \|_{H^1(\R_{+}^*)} = \cO(k^{n-\frac{1}{2}}e^{-\frac{7k^2}{8}}) .
$$
In terms of the quadratic form $\gq_k$ defined in \eqref{D:fqgh}, it follows that
\bel{qm5a}
\gq_{k}[u_{n}(\cdot,k)-f_{n}(\cdot,k)] = \cO(k^{2n-1}e^{-\frac{7k^2}{4}}).
\ee
\end{proposition}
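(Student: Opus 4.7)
The approach is largely organizational: the proposition consolidates estimates already derived in Section~\ref{subsec:H1-est1}, so the plan is to assemble them cleanly and then transfer the bounds from the spectral projection $\Pi_n(k)f_n$ to the normalized eigenfunction $u_n$.

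For the first display, I would invoke \eqref{qm5}, which bounds in $H^1$-norm the difference between $f_n(\cdot,k)$ and $\Pi_n(k)f_n(\cdot,k)/\|\Pi_n(k)f_n(\cdot,k)\|_{L^2(\R_+^*)}$ by $\cO(k^{n-1/2}e^{-7k^2/8})$. By the identity \eqref{E:approxu-f} this quotient is precisely $u_n(\cdot,k)$, provided one fixes the sign of the eigenfunction (replacing $u_n$ by $-u_n$ if needed). The first claim then reads off immediately.

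For the quadratic-form estimate \eqref{qm5a}, I would first extract from \eqref{E:estimgq}, combined with \eqref{E:estimater} and Part~i) of Theorem~\ref{thm-asymptotics}, the bound $\gq_k[f_n(\cdot,k)-\Pi_n(k)f_n(\cdot,k)]^{1/2}=\cO(k^{n-1/2}e^{-7k^2/8})$. To pass from $\Pi_n(k)f_n$ to $u_n$, I would use the decomposition
\begin{equation*}
u_n(\cdot,k)-f_n(\cdot,k)=\bigl(1-\|\Pi_n(k)f_n(\cdot,k)\|_{L^2(\R_+^*)}\bigr)\,u_n(\cdot,k)+\bigl(\Pi_n(k)f_n(\cdot,k)-f_n(\cdot,k)\bigr),
\end{equation*}
apply the triangle inequality in the seminorm $\gq_k^{1/2}$, and bound the first term using the normalization estimate \eqref{qm4b} together with the uniform bound $\gq_k[u_n(\cdot,k)]^{1/2}=\lambda_n(k)^{1/2}=\cO(1)$ furnished by \eqref{E:limband}. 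Squaring the resulting $\gq_k^{1/2}$-estimate yields \eqref{qm5a}.

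The only point requiring mild care is that $\gq_k^{1/2}$ is not uniformly comparable to the $H^1$-norm as $k\to+\infty$ (the potential weight $(x-k)^2$ is small near $x=k$ and unbounded far from it), so the statement \eqref{qm5a} cannot be read off mechanically from the $H^1$-statement; it must genuinely be argued through the spectral projection as just described. Beyond this bookkeeping observation, no analytical ingredient beyond those already worked out in Section~\ref{subsec:H1-est1} is needed.
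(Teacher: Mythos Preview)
Your proposal is correct and follows essentially the same route as the paper. The paper's own proof of the second part merely lists the ingredients \eqref{E:estimater}, Theorem~\ref{thm-asymptotics}(i), \eqref{E:estimgq}, \eqref{qm4b}, and \eqref{E:approxu-f}; your decomposition $u_n-f_n=(1-\|\Pi_n f_n\|)u_n+(\Pi_n f_n-f_n)$ together with $\gq_k[u_n]=\lambda_n(k)=\cO(1)$ is exactly the way to combine them, and your remark that \eqref{qm5a} cannot be read off from the $H^1$-estimate (because $\gq_k$ is not dominated by the $H^1$-norm uniformly in $k$) is a useful clarification the paper leaves implicit.
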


The proof of the second part of the proposition follows from \eqref{E:estimater},
the first part of Theorem \ref{thm-asymptotics}, \eqref{E:estimgq}, \eqref{qm4b} and \eqref{E:approxu-f}.


\subsection{$H^2$-estimate of the eigenfunctions}\label{subsec:H2-est1}

The $H^1$-estimate of Proposition \ref{prop:H1-est1} implies uniform pointwise approximation of $u_n(x,k)$ by $f_n(x,k)$.
 The Hadamard formula \eqref{E:formuleHadamard} requires
 a pointwise estimate of $u_n'(\cdot,k)$.
  Consequently, we need to estimate $u_n(\cdot,k)$ in the $H^2$-topology. Actually, $u_n(\cdot,k)$ being an eigenfunction of $\gh(k)$, it is enough to estimate the $H^1$-norm of $x^2 u_{n}(\cdot,k)$. The same problem was investigated in \cite[Section 5]{BolHe93} in the context of a bounded interval so the authors could take advantage of the fact that the multiplier by $x^2$ is a bounded operator. Although this is not the case in the framework of the present paper this slight technical issue can be overcomed through elementary commutator computations performed in the following subsection.


We start with the following straightforward inequality
\bea
\label{E:liennormH2normfg}
\|u_{n}''(\cdot,k)-f_{n}''(\cdot,k)\|_{L^2(\R_+^*)} & \leq &  \|\gh(k)[ u_{n}(\cdot,k)-f_{n}(\cdot,k)] \|_{L^2(\R_+^*)} \nonumber \\
 & & + \|(x-k)^2 [ u_{n}(\cdot,k)-  f_{n}(\cdot,k)]\|_{L^2(\R_+^*)}.
\eea
Recall from \eqref{qm2} that $r_n(x,k) = ( \gh(k) - E_n) f_n(x,k)$. Then, since
$$
\gh(k) [ u_{n}(\cdot,k)-f_{n}(\cdot,k) ] =\lambda_{n}(k)u_{n}(\cdot,k)-E_{n}f_{n}(\cdot,k)+r_{n}(\cdot,k),
$$
the first term on the right hand side of \eqref{E:liennormH2normfg} is bounded above by
$$
\| \gh(k)[ u_{n}(\cdot,k)-f_{n}(\cdot,k)] \|_{L^2(\R_+^*)}\leq |\lambda_{n}(k)-E_{n}|+E_{n}\|u_{n}(\cdot,k)-f_{n}(\cdot,k)\|_{L^2(\R_+^*)} + \|r_{n}(\cdot,k) \|_{L^2(\R_+^*)} .
$$
The first part of Theorem \ref{thm-asymptotics}, \eqref{E:estimater} and \eqref{E:approxu-f} then yield
\begin{equation}
\label{E:estimterme1}
 \| \gh(k)[ u_{n}(\cdot,k)-f_{n}(\cdot,k) ] \|_{L^2(\R_+^*)}=\cO(k^{n-\frac{1}{2}}e^{-\frac{7k^2}{8}}).
\end{equation}

To treat the second term on the right in \eqref{E:liennormH2normfg}, we introduce
$$ v_{n}(x,k):=(x-k)u_{n}(x,k) \quad \mbox{and}\quad g_{n}(x,k):=(x-k)f_{n}(x,k),$$
and notice that $g_{n}(\cdot,k)$ belongs to $\dom(\gh(k))$. Similarly, taking into account that $u_{n}(\cdot,k)$ decays super-exponentially fast for $x$ sufficiently large, since $\lim_{x \rightarrow +\infty} V(x,k)=+\infty$, we see that $v_{n}(\cdot,k)$ belongs to $\dom(\gh(k))$ as well. Therefore, we have
$$\gh(k)\left(v_{n}(x,k)-g_{n}(x,k)\right)=\lambda_{n}(k)v_{n}(x,k)-E_{n}g_{n}(x,k)-2(u_{n}'(x,k)-f_{n}'(x,k))+(x-k)r_{n}(x,k), $$
by straightforward computations, hence
\begin{equation}
\label{E:trouveqv-g}
\gq_{k}[v_{n}(\cdot,k)-g_{n}(\cdot,k)]\leq \|\tr_{n}(\cdot,k) \|_{L^2(\R_+^*)} \|v_{n}(\cdot,k)-g_{n}(\cdot,k)\|_{L^2(\R_+^*)},
\end{equation}
where we have set
\bel{qm5b}
\tr_{n}(x,k):=\lambda_{n}(k)v_{n}(x,k)-E_{n}g_{n}(x,k)-2(u_{n}'(x,k)-f_{n}'(x,k))+(x-k)r_{n}(x,k),\ x>0.
\ee
Evidently,
\begin{equation}
\label{E:estimordre1}
\|v_{n}(\cdot,k)-g_{n}(\cdot,k)\|_{L^2(\R_+^*)}^2 \leq \gq_{k}[u_{n}(\cdot,k)-f_{n}(\cdot,k)],
\end{equation}
so, by \eqref{qm5a},
we are left with the task of estimating the $L^2$-norm of $\tr_{n}(\cdot,k)$.  In light of \eqref{qm5b}-\eqref{E:estimordre1} and the basic estimate
$\|u_{n}'(\cdot,k)-f_{n}'(\cdot,k)\|_{L^2(\R_+^*)}^2 \leq \gq_{k}[u_{n}(\cdot,k)-f_{n}(\cdot,k)]$, we find that
\bea
\|\tr_{n}(\cdot,k)\|_{L^2(\R_+^*)}
& \leq & |\lambda_{n}(k)-E_{n}|\|g_{n}(\cdot,k)\|_{L^2(\R_+^*)}+(2+\lambda_{n}(k)) \gq_{k}[u_{n}(\cdot,k)-f_{n}(\cdot,k)]^{1/2}  \nonumber \\
& & +\| (x-k)r_{n}(\cdot,k)\|_{L^2(\R_+^*)}. \label{qm6}
\eea
Next, we pick $k_n>0$ so large that $\eta_{n}(k) =\gq_{k}[f_{n}(\cdot,k)] \leq E_n+1$ for all $k \geq k_n$, according to \eqref{E:etaestim}, so we have
\bel{qm7}
\|g_{n}(\cdot,k)\|_{L^2(\R_+^*)} \leq \gq_{k}[f_{n}(\cdot,k)] \leq E_n+1,\ k \geq k_n.
\ee
Bearing in mind that $\lambda_{n}(k)\leq \lambda_{n}(0) \leq 4n-1$ for $k \geq 0$, we deduce from \eqref{qm6}-\eqref{qm7} that
\bel{qm8}
\|\tr_{n}(\cdot,k)\|_{L^2(\R_+^*)}\leq c_n \left(|\lambda_{n}(k)-E_{n}|+\gq_{k}[u_{n}(\cdot,k)-f_{n}(\cdot,k)]^{1/2}\right)+\| (x-k)r_{n}(\cdot,k)\|_{L^2(\R_+^*)},
\ee
for all $k \geq k_n$, where $c_n$ is some positive constant depending only on $n$.
Last, recalling \eqref{E:suppr} and \eqref{E:estimater}, we get that
$$\|(x-k)r_{n}(\cdot,k)\|_{L^2(\R_+^*)}=\cO(k^{n+\frac{1}{2}}e^{-\frac{7k^2}{8}}). $$
From this, Part i) in Theorem \ref{thm-asymptotics}, \eqref{qm5a} and \eqref{qm8} then it follows that
\bel{qm9}
\|\tr_{n}(\cdot,k)\|_{L^2(\R_+^*)}=\cO(k^{n+\frac{1}{2}}e^{-\frac{7k^2}{8}}).
\ee

Now, putting \eqref{qm5a}, \eqref{E:trouveqv-g}, \eqref{E:estimordre1} and \eqref{qm9} together, we obtain
\bel{qm10}
\gq_{k}[v_{n}(\cdot,k)-g_{n}(\cdot,k)] =\cO(k^{2n}e^{-\frac{7k^2}{4}}).
\ee
Further, since $\|(x-k)^2u_{n}(\cdot,k)-(x-k)^2f_{n}(\cdot,k)\|_{L^2(\R_+^*)}^2 \leq \gq_{k}[v_{n}(\cdot,k)-g_{n}(\cdot,k)]$, we deduce from
\eqref{E:liennormH2normfg}, \eqref{E:estimterme1} and \eqref{qm10} that $\|f_{n}''(\cdot,k)-u_{n}''(\cdot,k) \|_{L^2(\R_+^*)} =\cO(k^{n}e^{-\frac{7k^2}{8}})$. We obtain the following proposition.

\begin{proposition}\label{prop:H2est1}
For all $n\in \N^{*}$, there exists $k_{n}\in \R$ and $C_{n}>0$ such that we have
\bel{eq:h2est2}
\forall k >k_{n}, \quad \|f_{n}(\cdot,k)-u_{n}(\cdot,k) \|_{H^2(\R_+^*)} \leq C_{n}k^{n}e^{-\frac{7k^2}{8}}.
\ee
\end{proposition}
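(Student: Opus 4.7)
The plan is to combine Proposition~\ref{prop:H1-est1} with an $L^2$ bound on $u_n''(\cdot,k)-f_n''(\cdot,k)$. Since $u_n(\cdot,k)-f_n(\cdot,k)\in \dom(\gh(k))$, I would rewrite $-\partial_x^2=\gh(k)-(x-k)^2$ and apply the triangle inequality \eqref{E:liennormH2normfg}. The first term there is routine: expanding $\gh(k)[u_n-f_n]=\lambda_n(k)u_n-E_nf_n+r_n$ via the eigenvalue equation and \eqref{qm2}, then invoking Theorem~\ref{thm-asymptotics}(i), Proposition~\ref{prop:H1-est1} and \eqref{E:estimater}, produces the bound \eqref{E:estimterme1}.

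The main obstacle is the second term $\|(x-k)^2(u_n-f_n)\|_{L^2(\R_+^*)}$, since the multiplier $(x-k)^2$ is unbounded on $\R_+^*$ and cannot be absorbed directly into the $L^2$-error from Proposition~\ref{prop:H1-est1}; this is exactly the technical difficulty the authors flagged as distinguishing their situation from the bounded-interval setting of \cite{BolHe93}. My strategy is to pass to the quadratic-form level. Introducing $v_n(x,k):=(x-k)u_n(x,k)$ and $g_n(x,k):=(x-k)f_n(x,k)$, both of which lie in $\dom(\gh(k))$ (using that $u_n(\cdot,k)$ decays super-exponentially as $x\to+\infty$ by virtue of $\lim_{x\to+\infty}V(x,k)=+\infty$, and that $f_n(\cdot,k)$ equals $\alpha(k)\Psi_n(\cdot-k)$ for $x\geq 3k/4$), one has the pointwise identity that gives $\|(x-k)^2(u_n-f_n)\|_{L^2}^2\leq \gq_k[v_n-g_n]$. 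A direct computation using the commutator $[\gh(k),x-k]=-2\partial_x$ yields $\gh(k)(v_n-g_n)=\tr_n$ with $\tr_n$ of the form \eqref{qm5b}, and pairing with $v_n-g_n$ via Cauchy--Schwarz produces \eqref{E:trouveqv-g}.

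It remains to estimate $\|\tr_n(\cdot,k)\|_{L^2(\R_+^*)}$. I would decompose the four terms in \eqref{qm5b} and estimate each one separately: Theorem~\ref{thm-asymptotics}(i) controls $|\lambda_n(k)-E_n|$; the trivial inequality $\|u_n'-f_n'\|_{L^2}^2\leq \gq_k[u_n-f_n]$ together with \eqref{qm5a} controls the derivative difference; the bound $\|g_n\|_{L^2}^2\leq \gq_k[f_n]=\eta_n(k)\leq E_n+1$ for $k$ large (using \eqref{E:etaestim}) together with $\lambda_n(k)\leq \lambda_n(0)\leq 4n-1$ controls the first two summands of $\tr_n$; and finally the support localization \eqref{E:suppr} combined with \eqref{E:estimater} yields $\|(x-k)r_n\|_{L^2}=\cO(k^{n+1/2}e^{-7k^2/8})$. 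Gathering these bounds produces \eqref{qm9}, which fed back into \eqref{E:trouveqv-g} with \eqref{E:estimordre1} gives \eqref{qm10}. Inserting both \eqref{E:estimterme1} and $\sqrt{\gq_k[v_n-g_n]}$ into \eqref{E:liennormH2normfg} yields $\|u_n''(\cdot,k)-f_n''(\cdot,k)\|_{L^2(\R_+^*)}=\cO(k^n e^{-7k^2/8})$. Combining this $L^2$-bound on the second derivatives with the $H^1$-estimate of Proposition~\ref{prop:H1-est1} gives the claimed $H^2$-estimate \eqref{eq:h2est2}.
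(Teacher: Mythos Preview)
Your proposal is correct and follows essentially the same approach as the paper's own argument in Section~\ref{subsec:H2-est1}: the same splitting \eqref{E:liennormH2normfg}, the same auxiliary functions $v_n=(x-k)u_n$ and $g_n=(x-k)f_n$, the same commutator identity leading to \eqref{qm5b}, and the same chain of estimates \eqref{E:trouveqv-g}--\eqref{qm10}. The only cosmetic difference is that you phrase the passage from $\gh(k)(v_n-g_n)=\tr_n$ to \eqref{E:trouveqv-g} as ``pairing via Cauchy--Schwarz,'' while the paper simply writes the inequality; both are the same step since $v_n-g_n\in\dom(\gh(k))$ so that $\langle \gh(k)(v_n-g_n),v_n-g_n\rangle=\gq_k[v_n-g_n]$.
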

Since $H^2(\R_+^*)$ is continuously embedded in $W^{1,\infty}(\R_+^*)$, we deduce for $k\geq k_{n}$:
\bel{eq:Linf-deriv-est1}
\|f_{n}'(\cdot,k)-u_{n}'(\cdot,k)\|_{L^{\infty}(\R_{+}^*)}\leq C_{n} k^{n}e^{-\frac{7k^2}{8}}
\ee
and
\bel{eq:Linf-fnc-est1}
\|f_{n}(\cdot,k)-u_{n}(\cdot,k)\|_{L^{\infty}(\R_{+}^*)}\leq C_{n} k^{n}e^{-\frac{7k^2}{8}} \, .
\ee
These results guarantee that any pointwise estimate of $u_{n}'(\cdot,k)$ on $\R_+$ is uniformly well approximated by the one of the quasi-mode $f_{n}'(\cdot,k)$, provided $k$ is large enough. More precisely, we have
\bel{qm11}
u_{n}'(0,k)=f_{n}'(0,k)+\cO(k^{n}e^{-\frac{7k^2}{8}}).
\ee
Finally, plugging \eqref{E:evalf'0} into \eqref{qm11} and then applying \eqref{E:formuleHadamard}, we obtain the second part of
Theorem \ref{thm-asymptotics}.

\begin{remark}
Higher order expansions of $\lambda_{n}(k)$ and $\lambda_{n}'(k)$ may be derived from sharper asymptotics of the Hermite functions than \eqref{A:Psi}-\eqref{A:Psi'} (see \cite[Section 1.4]{Popoff}).
\end{remark}

\section{Characterization of bulk states}\label{sec:bulk-states1}

This section is devoted to characterizing functions in the bulk space $X^{\rm b}_{n,\delta}$ as $\delta\downarrow0$. This is achieved by means of the asymptotic analysis carried out in the previous sections.

 Remember from Subsection \ref{SS:HPmodel} that $\varphi\in X^{\rm b}_{n,\delta}$ decomposes as in \eqref{E:decomposepurebulk}, that $k_{n}(\delta)$ is defined by \eqref{D:kdeltan}, that $\varphi_{n}\in L^2(k_{n}((\delta),+\infty))$ and that the current carried by $\varphi$ is given by \eqref{E:currentbulk}. The asymptotic behavior of the quasi-momentum $k_{n}(\delta)$ when $\delta\downarrow0$ is derived in Section \ref{subsec:ev-est1}.  Section \ref{subsec:proof-of-main1} and \ref{subsec:proof-of-main2} are devoted to the proof of Theorems \ref{thm-bulk} and \ref{thm-bulk-loc1}.
\subsection{Estimates on quasi-momenta associated with bulk components}\label{subsec:ev-est1}

We already know that $k_{n}(\delta)$ goes to $+\infty$ as $\delta \downarrow 0$. More precisely:
\begin{lemma}
We have the following asymptotics as $\delta \downarrow 0$:
\bel{bs1}
k_{n}(\delta)=\sqrt{|\log\delta|}+\frac{2n-1}{4}\left(\frac{\log|\log\delta|}{\sqrt{|\log\delta|}}\right)+o\left(\frac{\log|\log\delta|}{\sqrt{|\log\delta|}}\right).
\ee
\end{lemma}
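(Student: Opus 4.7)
The plan is to invert the asymptotic expansion of $\lambda_n$ supplied by part i) of Theorem~\ref{thm-asymptotics}, treating $k_n(\delta)$ as the unknown and $\delta$ as the small parameter. Substituting $k=k_n(\delta)$ into i) yields
$$\delta \;=\; 2^{2n-1}\gamma_n^2\, k_n(\delta)^{2n-1}\, e^{-k_n(\delta)^2}\bigl(1+\cO(k_n(\delta)^{-2})\bigr),$$
and since $k_n(\delta)\to+\infty$ as $\delta\downarrow 0$ (by \eqref{E:limband} and the monotonicity of $\lambda_n$), taking logarithms gives the implicit relation
$$k_n(\delta)^2 \;=\; |\log\delta| \;+\; (2n-1)\log k_n(\delta) \;+\; c_n \;+\; \cO(k_n(\delta)^{-2}),$$
where $c_n:=\log(2^{2n-1}\gamma_n^2)$.

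From this implicit equation I would carry out a standard two-step bootstrap. First, discarding the lower-order terms on the right yields $k_n(\delta)^2 = |\log\delta|(1+o(1))$, hence $k_n(\delta)\sim\sqrt{|\log\delta|}$ and therefore $\log k_n(\delta) = \tfrac12 \log|\log\delta| + o(1)$. Second, reinserting this refined estimate into the implicit equation produces
$$k_n(\delta)^2 \;=\; |\log\delta| \;+\; \tfrac{2n-1}{2}\log|\log\delta| \;+\; \cO(1),$$
and then taking a square root and expanding $\sqrt{1+u}=1+u/2+\cO(u^2)$ with $u = \tfrac{2n-1}{2}\frac{\log|\log\delta|}{|\log\delta|} + \cO(|\log\delta|^{-1})$ gives
$$k_n(\delta) \;=\; \sqrt{|\log\delta|} \;+\; \frac{2n-1}{4}\frac{\log|\log\delta|}{\sqrt{|\log\delta|}} \;+\; \cO\!\left(\frac{1}{\sqrt{|\log\delta|}}\right).$$
Since $\cO(|\log\delta|^{-1/2}) = o\!\bigl(\log|\log\delta|/\sqrt{|\log\delta|}\bigr)$, the announced asymptotic \eqref{bs1} follows.

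No genuine obstacle is anticipated: the only points that need a line of justification are (a) that $k_n(\delta)\to+\infty$, which is immediate from the fact that $\lambda_n$ is strictly decreasing with limit $E_n$ at $+\infty$, and (b) that the error $\cO(k_n(\delta)^{-2})$ arising inside the logarithm is genuinely $\cO(|\log\delta|^{-1})$, which is what allows it to be absorbed in the final $o$-term. The argument is essentially a Lagrange-type inversion and relies crucially on the sharp prefactor $2^{2n-1}\gamma_n^2 k^{2n-1}e^{-k^2}$ in Theorem~\ref{thm-asymptotics}: a weaker bound of the form $e^{-\alpha k^2/2}$ (as recalled from \cite{DeB-P}) would give only the leading $\sqrt{|\log\delta|}$ term and could not produce the second-order correction $\tfrac{2n-1}{4}\log|\log\delta|/\sqrt{|\log\delta|}$, which depends explicitly on the polynomial factor $k^{2n-1}$.
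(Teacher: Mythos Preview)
Your proposal is correct and follows essentially the same approach as the paper: apply Theorem~\ref{thm-asymptotics}~i) at $k=k_n(\delta)$, take logarithms, bootstrap once to get $k_n(\delta)\sim\sqrt{|\log\delta|}$, reinsert to obtain $k_n(\delta)^2=|\log\delta|+\tfrac{2n-1}{2}\log|\log\delta|+\cO(1)$, and expand the square root. The paper's proof is identical in structure; your write-up is in fact slightly more explicit about the final square-root expansion and the absorption of the $\cO(|\log\delta|^{-1/2})$ remainder into the $o$-term.
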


\begin{proof}
Since $\lim_{\delta \downarrow 0} k_{n}(\delta)=+\infty$ by \eqref{E:limband}, we deduce from the first part of
Theorem \ref{thm-asymptotics} that
$$\gamma_{n}^22^{2n-1}k_{n}(\delta)^{2n-1}e^{-k_{n}(\delta)^2}\left(1+\cO(k_{n}(\delta)^{-2})\right)=\delta.$$
Set $\widetilde{\gamma}_{n}:=\log(\gamma_{n}2^{2n-1})$. Taking the logarithm of both sides of this identity we find
\begin{equation}
\label{E:logkdelta}
\widetilde{\gamma}_{n}+(2n-1)\log(k_{n}(\delta))-k_{n}(\delta)^2=\log \delta +\cO(k_{n}(\delta)^{-2}),
\end{equation}
showing that
\bel{bs0}
k_{n}(\delta)\underset{\delta\downarrow 0}{\sim}\sqrt{|\log \delta|}.
\ee
Plugging this into \eqref{E:logkdelta}, we get
\beas
 k_{n}^2(\delta)&=& -\log \delta+(2n-1)\log\left(\sqrt{|\log \delta|}+o(\sqrt{|\log\delta|}) \right)+\widetilde{\gamma}_{n}+\cO(k_{n}(\delta)^{-2})
 \\
 &=& -\log \delta+\frac{2n-1}{2}\log\left(-\log \delta)\right)+\widetilde{\gamma}_{n}+o(1),
\eeas
which entails \eqref{bs1}.
\end{proof}
Notice that the first-order term in the above asymptotic expansion of $k_{n}(\delta)$ as $\delta \downarrow 0$, is independent of $n$.


\subsection{Asymptotic velocity and proof of Theorem \ref{thm-bulk}}\label{subsec:proof-of-main1}

In light of \eqref{bs1} we may estimate the asymptotics of $\lambda_{n}'(k(\delta))$ as $\delta \downarrow 0$. We combine both parts of Theorem \ref{thm-asymptotics}, getting,
$$ \frac{\lambda_{n}'(k)}{\lambda_{n}(k)-E_n}=-2k  \left(1+\cO(k^{-2})\right)$$
and then substitute $k_n(\delta)$ (resp., the r.h.s. of \eqref{bs1}) for $k$ in the lhs (resp., the r.h.s.) of this identity. Bearing in mind that $\lambda_n(k_n(\delta)) = E_n + \delta$, we obtain
$$
\lambda_{n}'(k_{n}(\delta))=-2\delta \sqrt{|\log \delta|}-\frac{2n-1}{2}\left(\frac{\delta \log|\log\delta|}{\sqrt{|\log\delta|}}\right)+ o\left(\frac{\delta \log|\log\delta|}{\sqrt{|\log\delta|}}\right).
$$
Similarly to \eqref{bs1} it turns out that the first order term in this expansion does not depend on the energy level $n$.

Let us now upper bound $(-\lambda_{n}'(k))$ in the interval $(k_{n}(\delta),+\infty)$ with the following:
\begin{lemma}
\label{lm-derivative}
Let $n\in \N^*$. Then there are two constants $\delta_n>0$ and $\mu_{n}>0$, such that the estimate
$$ 0  \leq -\lambda_{n}'(k) \leq 2\delta \sqrt{|\log \delta|} + \mu_{n} \frac{\delta \log|\log\delta|}{\sqrt{|\log\delta|}}, $$
holds for all $\delta \in (0,\delta_n)$ and all $k \geq k_n(\delta)$.
\end{lemma}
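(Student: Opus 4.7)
The nonnegativity $-\lambda_n'(k)\geq 0$ follows immediately from the fact, recalled in the introduction, that $\lambda_n$ is a decreasing function on $\R$. The real content is the upper bound, which we intend to deduce from the asymptotic expansion given in Theorem \ref{thm-asymptotics}(ii) combined with the monotonicity of the leading term in this expansion. The strategy is: first check that $k\mapsto -\lambda_n'(k)$ is eventually decreasing, so that its supremum over $[k_n(\delta),+\infty)$ is attained at the left endpoint $k_n(\delta)$; then evaluate $-\lambda_n'(k_n(\delta))$ explicitly using the computation performed just above the statement of the lemma, together with the expansion \eqref{bs1} of $k_n(\delta)$.

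More precisely, by Theorem \ref{thm-asymptotics}(ii) we may write
\begin{equation*}
-\lambda_n'(k) = 2^{2n}\gamma_n^2\, k^{2n} e^{-k^2}\bigl(1+\cO(k^{-2})\bigr),\qquad k\to+\infty.
\end{equation*}
The leading prefactor $k\mapsto k^{2n}e^{-k^2}$ has derivative $2k^{2n-1}(n-k^2)e^{-k^2}$, which is strictly negative for $k>\sqrt{n}$. A straightforward differentiation of the right-hand side above, or alternatively a direct comparison argument exploiting the explicit dependence of the $\cO(k^{-2})$ term (which can be made uniform from the proof of Theorem \ref{thm-asymptotics}), shows that there exists $K_n>0$ such that $k\mapsto -\lambda_n'(k)$ is monotone decreasing on $[K_n,+\infty)$. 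Since $k_n(\delta)\to+\infty$ as $\delta\downarrow 0$ by \eqref{E:limband}, we may pick $\delta_n^{(1)}>0$ small enough that $k_n(\delta)\geq K_n$ whenever $\delta\in(0,\delta_n^{(1)})$. For such $\delta$ and any $k\geq k_n(\delta)$ we then have
\begin{equation*}
0 \leq -\lambda_n'(k) \leq -\lambda_n'(k_n(\delta)).
\end{equation*}

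It remains to estimate $-\lambda_n'(k_n(\delta))$. Dividing Theorem \ref{thm-asymptotics}(ii) by Theorem \ref{thm-asymptotics}(i) gives $\lambda_n'(k)/(\lambda_n(k)-E_n) = -2k(1+\cO(k^{-2}))$, as already noted just before the statement of the lemma. Evaluating at $k=k_n(\delta)$ and using $\lambda_n(k_n(\delta))-E_n=\delta$, we find
\begin{equation*}
-\lambda_n'(k_n(\delta)) = 2\delta\, k_n(\delta)\bigl(1+\cO(k_n(\delta)^{-2})\bigr).
\end{equation*}
Substituting the expansion \eqref{bs1} of $k_n(\delta)$ and using $k_n(\delta)^{-2}=\cO(|\log\delta|^{-1})$ yields
\begin{equation*}
-\lambda_n'(k_n(\delta)) = 2\delta\sqrt{|\log\delta|} + \frac{2n-1}{2}\,\frac{\delta\log|\log\delta|}{\sqrt{|\log\delta|}} + o\!\left(\frac{\delta\log|\log\delta|}{\sqrt{|\log\delta|}}\right)
\end{equation*}
as $\delta\downarrow 0$. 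Hence for any constant $\mu_n>(2n-1)/2$ we may choose $\delta_n\in(0,\delta_n^{(1)})$ small enough that the $o(\cdot)$ term is absorbed into the correction, giving the claimed bound for all $\delta\in(0,\delta_n)$ and all $k\geq k_n(\delta)$.

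The only step requiring any care is the monotonicity claim for $-\lambda_n'$: one must ensure that the $\cO(k^{-2})$ remainder in Theorem \ref{thm-asymptotics}(ii) does not spoil the monotonicity of the explicit leading term $k^{2n}e^{-k^2}$ at large $k$. This is not an obstacle in principle, but it is the one place where a little bookkeeping (either via the proof of Theorem \ref{thm-asymptotics} giving uniform control on the remainder, or via a direct comparison $-\lambda_n'(k) \leq (1+\varepsilon)\cdot 2^{2n}\gamma_n^2 k^{2n}e^{-k^2}$ and the reverse inequality at $k_n(\delta)$) is needed.
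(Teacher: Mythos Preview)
Your proof follows essentially the same strategy as the paper's: control $-\lambda_n'(k)$ for $k\geq k_n(\delta)$ by a quantity evaluated at $k_n(\delta)$, then estimate that using Theorem~\ref{thm-asymptotics} together with the expansion \eqref{bs1}. The computation in your second half (dividing (ii) by (i), substituting $k=k_n(\delta)$, inserting \eqref{bs1}) matches the paper exactly.

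The one substantive difference is in the reduction step. You aim for eventual monotonicity of $-\lambda_n'$ itself, which---as you rightly flag---does not follow from the asymptotic expansion alone without further control on the derivative of the $\cO(k^{-2})$ remainder. The paper avoids this issue entirely: it never claims $-\lambda_n'$ is decreasing, but instead passes directly to the explicit upper bound
\[
-\lambda_n'(k)\ \leq\ 2^{2n}\gamma_n^2\,k^{2n}e^{-k^2}\Bigl(1+\tfrac{c_n}{k^2}\Bigr),
\]
valid for large $k$ by Theorem~\ref{thm-asymptotics}(ii), and uses the elementary monotonicity of \emph{this} function on $[\sqrt{n},+\infty)$ to replace $k$ by $k_n(\delta)$. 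This is exactly the ``direct comparison'' fix you sketch in your final paragraph, so your argument is easily completed---but the paper's version is the cleaner route for precisely the reason you identify.
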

\begin{proof}
 From the second part of Theorem \ref{thm-asymptotics}, we may find two constants $\tilde{k}_n>0$ and $c_n>0$, depending only on $n$, such that we have
\bel{bs2}
\forall k \geq \tilde{k}_n,\ 0 \leq -\lambda_{n}'(k) \leq  2^{2n} \gamma_{n}^2 k^{2n}e^{-k^2} \left( 1+\frac{c_n}{k^{2}} \right).
\ee
With reference to \eqref{bs0}, we choose $\delta_n>0$ so small that $k_n(\delta_n) \geq \tilde{k}_n$. We get
\bel{bs3}
\forall \delta \in (0,\delta_n),\ \forall k \geq k_n(\delta),\ 0 \leq -\lambda_{n}'(k) \leq 2^{2n} \gamma_{n}^2 k^{2n}e^{-k^2} \left( 1 + \frac{c_n}{k^{2}} \right),
\ee
from \eqref{bs2}. Further, $k \mapsto 2^{2n} \gamma_{n}^2 k^{2n}e^{-k^2} (1+ c_n \slash k^{2})$ being a decreasing
function on $[\sqrt{n},+\infty)$, it follows from \eqref{bs3}, upon eventually shortening $\delta_n$ so that $k_n(\delta_n) \geq \sqrt{n}$, that
\bel{bs4}
\forall \delta \in (0,\delta_n),\ \forall k \geq k_n(\delta),\ 0 \leq -\lambda_{n}'(k) \leq 2^{2n} \gamma_{n}^2 k_n(\delta)^{2n}e^{-k_n(\delta)^2} \left( 1 + \frac{c_n}{k_n(\delta)^{2}} \right).
\ee
Due to the first part of Theorem \ref{thm-asymptotics} the r.h.s. of \eqref{bs4} is upper bounded by $2 k_n(\delta) (\lambda_n(k_n(\delta))-E_n)(1 + \tilde{c}_n \slash k_n(\delta)^2)$ for some constant $\tilde{c}_n>0$ depending only on $n$. The desired result follows from this, \eqref{bs1} and the identity $\lambda_n(k_n(\delta))-E_n=\delta$.
\end{proof}
Now Theorem \ref{thm-bulk} follows readily from \eqref{E:normealphan}, \eqref{E:currentbulk} and Lemma \ref{lm-derivative}.

\subsection{Proof of Theorem \ref{thm-bulk-loc1}}\label{subsec:proof-of-main2}

For $\epsilon \in (0,1)$ fixed, put $a_n(\delta) : = (1-\epsilon) k_n(\delta)$, where $k_{n}(\delta)$ is defined in \eqref{D:kdeltan}. Let $\varphi\in X_{n,\delta}^{\rm b}$ be a $L^2$-normalized state and define $\cE_n(\delta):=\int_{0}^{a_n(\delta)} \| \varphi(x,\cdot) \|_{L^2(\R)}^2 \rd x$. Then we have
\bel{eq:local-est0}
\int_{x=0}^{(1-\epsilon)\sqrt{|\log\delta|}}\int_{\R}|\varphi(x,y)|^2 \rd x \rd y \leq \cE_{n}(\delta),
\ee
from \eqref{bs1}, provided $\delta$ is small enough. In view of majorizing $\cE_n(\delta)$, we recall from \eqref{E:decomposepurebulk} that
$$ \| \varphi(x,\cdot) \|_{L^2(\R)} = \| \hat{\varphi}(x,\cdot) \|_{L^2(\R)} = \| \varphi_n u_n(x,\cdot) \|_{L^2(k_n(\delta),+\infty)}, $$
so we get that
\bel{eq:local-est1}
\cE_n(\delta) = \int_{k_n(\delta)}^\infty |\varphi_n(k)|^2 \| u_n(\cdot,k) \|_{L^2(0,a_n(\delta))}^2 \rd k.
\ee
Let $f_{n}(\cdot,k)$ be the quasi-mode of $\gh(k)$ introduced in Section \ref{SS:quasimode}. As
$$\| u_n(\cdot,k) \|_{L^2(0,a_n(\delta))}^2 \leq 2 \left( \| u_n(\cdot,k) - f_n(\cdot,k) \|_{L^2(\R_+^*)}^2 + \| f_n(\cdot,k) \|_{L^2(0,a_n(\delta))}^2 \right), $$
we deduce from \eqref{E:normealphan}, \eqref{eq:local-est1} and Proposition \ref{prop:H1-est1} that for every $\delta>0$ small enough, we have
\bel{eq:local-est3}
\cE_n(\delta) \leq C_n k_n(\delta)^{2n-1} e^{-7 k_n(\delta)^2 \slash 4}\| \varphi \|_{L^2(\Omega)}^2 +  2 \cF_n(\delta),
\ee
with $\cF_n(\delta):=\int_{k_n(\delta)}^{+\infty} | \varphi_n(k) |^2 \| f_n(\cdot,k) \|_{L^2(0,a_n(\delta))}^2 \rd k$. Here and henceforth, $C_n$ is some positive constant, depending only on $n$.
In virtue of \eqref{eq:local-est0} and \eqref{eq:local-est3}, we are thus left with the task of estimating $\cF_n(\delta)$ from above. To do that we use the explicit form \eqref{qm1} of the quasi-mode $f_n$, getting
\bel{eq:local-est3a}
\cF_n(\delta) \leq 2 ( \psi_n(\delta) + \phi_n(\delta)),
\ee
with
\bea
\psi_n(\delta) & := & \int_{k_n(\delta)}^\infty |\alpha(k)|^2|\varphi_{n}(k)|^2 \| \Psi_n(\cdot-k) \|_{L^2(0,a_n(\delta))}^2 \rd k \label{eq:local-est3b} \\
\phi_n(\delta) & := & \int_{k_n(\delta)}^\infty |\beta(k)|^2|\varphi_{n}(k)|^2 \| \Phi_n(\cdot-k) \|_{L^2(0,a_n(\delta))}^2 \rd k. \label{eq:local-est3c}
\eea
Bearing in mind that $k_{n}(\delta)$ tends to $+\infty$ as $\delta \downarrow 0$, we treat each of the two terms in the r.h.s. of \eqref{eq:local-est3a} separately.

Performing the change of variable $\tilde{x}=x-k$ in the r.h.s. of \eqref{eq:local-est3b} and bearing in mind that $k_{n}(\delta)$ tends to $+\infty$ as $\delta \downarrow 0$, we deduce from \eqref{A:alphak} that
\bea
\psi_{n}(\delta) & =  & \int_{k=k_{n}(\delta)}^{+\infty}\int_{\tilde{x}=-k}^{-k+(1-\epsilon)k_{n}(\delta)} |\alpha(k)|^2|\varphi_{n}(k)|^2|\Psi_{n}(\tilde{x})|^2\rd \tilde{x} \rd k
\nonumber \\
 & \leq & C_{n} \int_{k=k_{n}(\delta)}^{+\infty}\int_{\tilde{x}=-k}^{-k+(1-\epsilon)k_{n}(\delta)}|\varphi_{n}(k)|^2|\Psi_{n}(\tilde{x})|^2\rd \tilde{x} \rd k
\nonumber \\
& \leq & C_{n}\int_{\tilde{x}=-\infty}^{-\epsilon k_{n}(\delta)}\int_{k=\max(k_{n}(\delta),-\tilde{x})}^{-\tilde{x}+(1-\epsilon)k_{n}(\delta)}|\varphi_{n}(k)|^2|\Psi_{n}(\tilde{x})|^2 \rd k \rd \tilde{x}, \label{eq:local-est3d}
\eea
for $\delta$ sufficiently small. Next, recalling the normalization condition \eqref{E:normealphan}, giving $\int_{k\in \R}|\varphi_{n}(k)|^2\rd k =\|\varphi\|_{L^2(\Omega)}^2=1$, and taking $\delta>0$ so small that $\epsilon k_{n}(\delta)$ is sufficiently large in order to apply \eqref{A:Psi}, we derive from \eqref{eq:local-est3d} that
$$ \psi_{n}(\delta) \leq  C_{n} \int_{\tilde{x}=-\infty}^{-\epsilon k_{n}(\delta)} \tilde{x}^{2n}e^{-\tilde{x}^2} \rd \tilde{x}. $$
Further, taking into account that $\int_{-\infty}^{L} \tilde{x}^{m} e^{-\tilde{x}^2} \rd \tilde{x} \sim -\frac{L^{m-1}}{2}e^{-L^2}$ as $L\to-\infty$ for any $m \in \N$, we may thus find $\delta_{n}(\epsilon)>0$ so that we have
\bel{E:controlepsindelta}
\forall \delta \in (0,\delta_{n}(\epsilon)),\quad \psi_{n}(\delta)\leq C_{n} \epsilon^{2n-1} k_{n}(\delta)^{2n-1}e^{-\epsilon^2 k_{n}(\delta)^2}.
\ee
Similarly, upon substituting $\phi_n$, \eqref{E:beta1} and \eqref{eq:local-est3c} for $\psi_n$, \eqref{A:alphak} and \eqref{eq:local-est3b}, respectively, in the above reasoning, we find out for $\delta$ sufficiently small that
\bel{E:controlepsindeltab}
\phi_{n}(\delta)
\leq C_{n} \int_{\tilde{x}=-\infty}^{-\epsilon k_{n}(\delta)}\int_{k=\max(k_{n}(\delta),-\tilde{x})}^{-\tilde{x}+(1-\epsilon)k_{n}(\delta)}k^{4n-2} e^{-2k^2}|\varphi_{n}(k)|^2|\Phi_{n}(\tilde{x})|^2 \rd k \rd \tilde{x}.
\ee
Thus, taking $\delta>0$ so small that $k \mapsto k^{4n-2}e^{-2k^2}$ is decreasing for $k \geq k_n(\delta)$, we deduce from \eqref{E:controlepsindeltab} with the help of \eqref{E:normealphan}, that
$$
\phi_{n}(\delta)
\leq C_{n}  \left(
  \int_{\tilde{x}=-\infty}^{-k_{n}(\delta)} \tilde{x}^{4n-2} e^{-2 \tilde{x}^2}|\Phi_{n}(\tilde{x})|^2 \rd \tilde{x}
+ \int_{\tilde{x}=-k_{n}(\delta)}^{-\epsilon k_{n}(\delta)}k_{n}(\delta)^{4n-2} e^{-2k_{n}(\delta)^2}|\Phi_{n}(\tilde{x})|^2 \rd \tilde{x}  \right)
$$
Applying \eqref{A:Phi} we see that there exists $\delta_n(\epsilon)>0$ so small that we have
$$
\forall \delta \in (0,\delta_n(\epsilon)),\quad \phi_n(\delta) \leq C_n k_{n}(\delta)^{2n-3} e^{-k_{n}(\delta)^2}.
$$
Putting this together with \eqref{eq:local-est3}-\eqref{eq:local-est3a} and \eqref{E:controlepsindelta} we end up getting $\delta_{n}(\epsilon)>0$ such that
$$ \forall \delta \in (0,\delta_{n}(\epsilon)), \quad \cE_{n}(\delta) \leq C_{n} \epsilon^{2n-1} k_{n}(\delta)^{2n-1} e^{-\epsilon^2k_{n}(\delta)^2}. $$
Now, Theorem \ref{thm-bulk-loc1} follows from this, \eqref{bs1} and \eqref{eq:local-est0}.

\paragraph{Aknowledgement}
N. Popoff is financially supported by the ARCHIMEDE Labex (ANR-11-LABX- 0033) and the A*MIDEX project (ANR-11-IDEX-0001-02) funded by the ``Investissements d'Avenir" French government program managed by the ANR. He also thanks the University of Kentucky for its invitation in February 2014 to Lexington where this work has been finalized. P. D. Hislop thanks the  Centre de Physique Th\'eorique, CNRS, Luminy, Marseille, France, for its hospitality. P. D. Hislop was partially supported by the Universit\'e du Sud Toulon-Var, La Garde, France, and National Science Foundation grant 11-03104
during the time part of the work was done.
\bibliographystyle{abbrv}
\bibliography{biblio}

\end{document}